\newtheorem{theorem}{Theorem}
\newtheorem{lemma}[theorem]{Lemma}
\newtheorem{proposition}[theorem]{Proposition}
\newtheorem{definition}[theorem]{Definition}
\begin{document}
%\title{Opinion-Based Centrality in Multiplex Networks: A Convex Optimization Approach II}
\title%[Opinion-Based Centrality in Multiplex Networks]
	{Opinion-Based Centrality in Multiplex Networks: \\A Convex Optimization Approach}

%\author{NA }
%\author[NA]
%{NA}

\author{Alexandre \textsc{Reiffers-Masson} and Vincent \textsc{Labatut}\\
	Laboratoire Informatique d'Avignon (LIA) EA 4128 \\
    Universit\'e d'Avignon, France}
% \author[A. Reiffers-Masson and V. Labatut]
% 	{ALEXANDRE REIFFERS-MASSON and VINCENT LABATUT\\
% 	Laboratoire Informatique d'Avignon (LIA) EA 4128, Avignon, France}

%\begin{document}

\maketitle
\begin{abstract}
Most people simultaneously belong to several distinct social networks, in which their relations can be different. They have opinions about certain topics, which they share and spread on these networks, and are influenced by the opinions of other persons. In this paper, we build upon this observation to propose a new nodal centrality measure for multiplex networks. Our measure, called \textit{Opinion centrality}, is based on a stochastic model representing opinion propagation dynamics in such a network. We formulate an optimization problem consisting in maximizing the opinion of the whole network when controlling an external influence able to affect each node individually. We find a mathematical closed form of this problem, and use its solution to derive our centrality measure. According to the opinion centrality, the more a node is worth investing external influence, and the more it is central. We perform an empirical study of the proposed centrality over a toy network, as well as a collection of real-world networks. Our measure is generally negatively correlated with existing multiplex centrality measures, and highlights different types of nodes, accordingly to its definition.

\vspace{0.3cm}
\noindent \textcolor{red}{\textbf{Cite as:} A. Reiffers-Masson \& V. Labatut. \href{https://www.cambridge.org/core/journals/network-science/article/opinionbased-centrality-in-multiplex-networks-a-convex-optimization-approach/05125ECC8229CF2E43248A96B4DFE5AF}{Opinion-based Centrality in Multiplex Networks}. Network Science, 5(2):213-234, 2017. Doi: \href{https://doi.org/10.1017/nws.2017.7}{10.1017/nws.2017.7}}
\end{abstract}

%%%%%%%%%%%%%%%%%%%%%%%%%%%%%%%%%%%%%%%%%%%%%%%%%%%%%%%%%%%%%%%%%%%%%%%%%
\section{Introduction}
\label{sec:Introduction} 
In our ultra-connected world, many people simultaneously belong to several distinct social networks, in which their relations can be different. For instance, interpersonal connections are not necessarily the same in Online Social Networks (OSN) and in the offline, real-world. Moreover, the way people are interconnected can even differ when considering several different OSNs such as Facebook, LinkedIn or ResearchGate, due to the fact these services have different purposes. The same can be said about the various types of interactions one can experience in the real world: it is possible for someone to maintain simultaneously several types of relationship with the same person (collaboration, kinship, friendship, etc.). This has consequences in terms of information propagation, since one can not only exchange information inside a single social network, but also receive some information through one network and fetch it via another one. This situation may occur, for instance, in the framework of Online Social Networks, when a user gets some news on Twitter and shares it on Facebook.

When identifying the central nodes in such a network, it is of course necessary to take the multilayer nature of the structure into account, in order to avoid any information loss. For this purpose, various measures have been proposed in the literature. Most of them generalize widespread existing unilayer measures such as degree \parencite{Magnani2011, Domenico2013, Battiston2014}, betweenness \parencite{Magnani2013,Sole-Ribalta2014, Chakraborty2016c}, closeness \parencite{Magnani2011, Sole-Ribalta2015}, Eigenvector \parencite{Domenico2013, Sola2013, Battiston2014}, PageRank \parencite{Ng2011, Halu2013, Coscia2013} or HITS \parencite{Kolda2006}. These generalizations rely on the adaptation of unilayer concepts to the multilayer case. For instance, measures based on matrix decomposition are modified to handle tensors, whereas in others, the concept of geodesic distance is altered to take inter-layer paths into account. Several recent articles review multilayer networks and the related centrality measures \parencite{Domenico2013, Kivelae2013, Boccaletti2014}.

In this article, we propose to generalize another type of uniplex approach, based on the resolution of optimization problems on diffusion models. One of the first opinion diffusion models was developed by \textcite{degroot1974reaching}. In this model, the persons are embedded in a social network and update their opinion (a real number in $[0,1]$) over time, by taking the average opinion of their neighbors. Some of the major extensions of this model are summarized by \textcite{jackson2008social}. Later works focused not only on understanding the opinion adoption process, but also on controlling it. Recent papers \parencite{borkar2010manufacturing, Bimpikis_competingover} suggest that we can use the theory of optimization and control in order to design efficient strategies for the control of opinion diffusion. In \parencite{borkar2010manufacturing}, the authors propose to impose a particular opinion to certain nodes in order to make the whole network adopt it too. In \parencite{Bimpikis_competingover}, the authors study how much a company needs to invest in a person in order to improve the adoption of some product it wants to sell to the whole network. The node targeted by these strategies can somehow be considered as central, and we want to explore this type of centrality in this work.

Our measure is designed specifically for multiplex networks. A multiplex network, or edge-colored multigraph in the nomenclature of \textcite{Kivelae2013}, corresponds to a multilayer network in which all nodes are present in all layers, and one node is connected to all its counterparts in all layers (the so-called categorical inter-layer coupling). Our approach is based on a model representing how the individual opinion, for a topic of interest, evolves among a group of persons. Each layer represents the influence people have on each other in a given social context or for a given social media. A person is influenced by his neighbors, like in DeGroot models. However, we introduce an additional influence in the model, which represents an external party able to affect each person individually. For example, in a marketing context, this external influencer could be a firm willing to direct its communication towards certain persons in the considered social group. Our centrality measure is related to the solution of the optimization problem consisting in determining which amount of external influence to invest in each person, in order to maximize the overall opinion of the social group.

Our contributions are the following. First, we propose a model of opinion diffusion for multiplex networks. Second, we define an optimization problem consisting in maximizing this diffusion when controlling the information entering the network. Third, we define a multiplex centrality measure based on the solution of this problem.

The rest of this article is organized as follows. In the next section, we describe our stochastic model of opinion evolution. In Section \ref{sec:centrality}, we define our optimization problem, and use it to derive our centrality measure. In Section \ref{sec:experiments}, we study its behavior on a toy network and on real-world networks, comparing it to existing multiplex centrality measures.

%%%%%%%%%%%%%%%%%%%%%%%%%%%%%%%%%%%%%%%%%%%%%%%%%%%%%%%%%%%%%%%%%%%%%%%%%
\section{Opinion Dynamics: A Stochastic Model}
\label{sec:model}
We propose to model the evolution of the opinion of a group of people belonging to different social networks. In this section, we describe our model, before using it to define a multiplex centrality measure in the next section.

Let us consider a group of people, some of which have influence over others. This can be represented as a network, whose nodes correspond to each individual and directed links correspond to the influence one individual exerts over another. In the well-known Degroot model \parencite{degroot1974reaching, jackson2008social}, the opinion of a person regarding a topic of interest depends on the opinion of the people influencing him. The work presented in \parencite{Bimpikis_competingover} is based on this model, with an additional controlled exogenous influence, able to affect each individual. The authors use this modified model to identify optimal control strategies in terms of opinion diffusion. Such a strategy consists in determining which individual to influence, and how much they should be influenced. Our approach is based on the same principle, except we extend the model by allowing several media for inter-individual influence.

Suppose we want to consider not one, but \textit{several} social networks at once. This results in a multiplex graph, in which each layer corresponds to the influence relationships observed in a specific social network (layer). The nodes are the same in all layers, since they represent the same person in various contexts. Formally, we note $\mathcal{I}:=\{1,\ldots,I\}$ the set of persons and $\mathcal{C}:=\{1,\ldots,C\}$ the set of social networks. For a given social network $c\in\mathcal{C}$ (layer), we call $E_c\in[0,1]^{I\times I}$ the \textit{imitation matrix}, whose $ij$-entry $e_{ijc}$ is the probability that person $j$ is influenced by person $i$ in social network $c$, and consequently mimics his opinion. In other words, $E_c$ is the (normalized) weighted adjacency matrix of the $c^{th}$ layer of our multiplex network.

Let us now describe the process taking place on this network. We assume that a person $i\in\mathcal{I}$ is influenced by a source external to the network, according to a Poisson point process of intensity $\lambda_{i}\in\mathbb{R}_+$. Moreover, a person $i$ decides to mimic the opinion of one of his neighbors in layer $c$ according to a Poisson point process of intensity $\alpha_{ic}\in\mathbb{R}_+$. Let $k\in\mathbb{N}_+$ denote the $k^{th}$ event (external or internal influence). We define $\Lambda:=\sum_{j=1}^I\sum_{d=1}^C\lambda_{j}+\alpha_{jd}$. Then, $\lambda_{i}\Lambda^{-1}$ is the probability that the $k^{th}$ event is individual $i$ undergoing an external influence, whereas $\alpha_{ic}\Lambda^{-1}$ is the probability for this event to be individual $i$ mimicking his neighbors' opinion in social network $c$ (i.e. undergoing an internal influence). 

For each event $k$, let $x_{i}(k)\in[0,1]$ be the opinion of person $i$. A zero value means person $i$ has no interest for the considered topic, whereas $1$ represents a full interest. By modeling opinions using real values in $[0,1]$, we follow the line of work of \textcite{degroot1974reaching}. We call the vector $\mathbf{x}(k):=(x_{1}(k)\ldots,x_{I}(k))$ the \textit{opinion profile} of the whole social group at event $k$. As mentioned before, it can be updated due to the direct influence of other individuals' opinions, or due to an external influence.  For each $i$, let $x_{i}(0)=x_{i}^0\in[0,1]$. 

Let us now detail how the opinion of a person $i$ is updated. One of three scenarios are possible:
\begin{enumerate}
	\item If at event $k$, person $i$ receives a message (with probability $\lambda_{i}\Lambda^{-1}$), then his opinion is updated as follows:
	\begin{eqnarray}
		x_{i}(k+1)&=&x_{i}(k)(1-\delta)+\delta  
	\end{eqnarray}
	where $\delta\in]0,1[$ is the step-size. 
    
	\item If at event $k$, person $i$ decides to imitate its neighbors in social network $c$ (which occurs with probability $\alpha_{ic}\Lambda^{-1}$), then his opinion is updated as follows:
	\begin{eqnarray}
		x_{i}(k+1)&=&x_{i}(k)(1-\delta)+\delta\left(\sum_{j=1}^Ie_{jic}x_{j}(k)\right),	\end{eqnarray}

	\item If event $k$ does not concern person $i$ (which occurs with probability $(\sum_{c}\sum_{j\neq i}\alpha_{ic}+\lambda_j)\Lambda^{-1}$), then his opinion is updated as follows:
	\begin{eqnarray}
		x_{i}(k+1)&=&x_{i}(k)(1-\delta).
	\end{eqnarray}
\end{enumerate}

In this model, opinion evolution is a constant step-size stochastic approximation. Indeed, for each $k$ and each person $i$, $x_{i}(k)$ can be rewritten as follows:
\begin{eqnarray}
	\label{eq:discreteevolutionofx}
	x_{i}(k+1)&=& x_{i}(k)+\delta \left(Y_{ic}(k)-x_{i}(k)\right), \\\nonumber
	&&x_{i}(0)=x_{i}^0,
\end{eqnarray}
where 
\begin{equation}
	\label{eq:updateuser}
	Y_{ic}(k):=\left\{\begin{array}{lll}
	1&\text{ w.p }& \lambda_{i}\Lambda^{-1},\\
	\displaystyle{\sum_{j=1}^Ie_{jic}x_{j}(k)}&\text{ w.p } &\alpha_{ic}\Lambda^{-1},\\
	0 &\text{ w.p } & (\sum_{c}\sum_{j\neq i}\alpha_{ic}+\lambda_j)\Lambda^{-1}.
	\end{array}\right. 
\end{equation} 

The previous equations suggest that for each person $i$,  $x_{i}(k)$ is a stochastic finite difference Euler schemes of the following system of differential equations:
\begin{eqnarray}
	\label{eq:diffeq}
	\dot{x}_{i}(t)&=&\lambda_{i}\Lambda^{-1}+\Lambda^{-1}\sum_{c=1}^C\alpha_{ic}\sum_{j=1}^Ie_{jic}x_{j}(t)-x_{i}(t),  \\\nonumber
	&&x_{i}(0)=x_{i}^0.
\end{eqnarray}
Let $\overline{E}$ be a matrix such that the $ij$-entry of $\overline{E}$ is equal to $\overline{e}_{ij}=\sum_{c=1}^{C}\alpha_{ic}e_{jic}$. Let $E^T$ be the transpose of matrix $E$. Let $A$ be equal to $(\overline{E}-Id_{I})^{-1}$ where $Id_I$ is the identity matrix of dimension $I^2$.

We are interested to compute, if it is possible, $\lim_{k\rightarrow +\infty}x_{i}(k)$. We use the theory of stochastic approximation \parencite{borkar2008stochastic}, which highlights the relation between $\lim_{k\rightarrow +\infty}x_{i}(k)$ and $\lim_{t\rightarrow +\infty}x_{i}(t)$. The next proposition gives us the expected result.

\begin{proposition}
	\label{proposition:convergence} 
    If for all $i$, $\sum_{j=1}^I\overline{e}_{ij}<1$, $\frac{1}{2}[\overline{E}+\overline{E}^T]$ has negative Eigenvalues and $\delta \ll 1$, then for each $i$,
\begin{eqnarray}
	\lim_{k\rightarrow +\infty}x_{i}(k)&=&x_{i}^*,
    \end{eqnarray}
where $x_{i}^*$ is solution of
\begin{equation}
0=\lambda_{i}\Lambda^{-1}+\Lambda^{-1}\sum_{c=1}^C\alpha_{ic}\sum_{j=1}^Ie_{jic}x_{j}^*-x_{i}^*.\end{equation}
\end{proposition}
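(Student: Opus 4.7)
Since the excerpt explicitly invokes the ODE method of stochastic approximation from \textcite{borkar2008stochastic}, my plan is twofold: recognize \eqref{eq:discreteevolutionofx} as a constant step-size stochastic approximation whose associated limit ODE is exactly \eqref{eq:diffeq}, and show that this ODE admits a unique, globally asymptotically stable equilibrium. Once both ingredients are in place, convergence of the iterates will follow from the classical fluid-limit theorem for constant step-size schemes.

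The first part is essentially bookkeeping. The iterates $x_i(k)$ remain in $[0,1]$ because every update in \eqref{eq:discreteevolutionofx} is a convex combination, with coefficient $\delta\in{]0,1[}$, between $x_i(k)$ and a quantity $Y_{ic}(k)\in[0,1]$. The noise $M(k+1):=Y_{ic}(k)-\mathbb{E}[Y_{ic}(k)\mid\mathcal{F}_k]$ is a bounded martingale difference sequence, and taking conditional expectations in \eqref{eq:discreteevolutionofx} produces exactly the right-hand side of \eqref{eq:diffeq} as the deterministic drift. This places us in the standard Borkar setting with constant step-size $\delta$.

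The second part is the main analytic step. I would first use the row-sum condition $\sum_{j=1}^{I}\overline{e}_{ij}<1$ to argue that $Id_I-\Lambda^{-1}\overline{E}$ is strictly diagonally dominant, hence invertible, so the fixed-point equation displayed in the proposition has the unique solution $x^{*}=(Id_I-\Lambda^{-1}\overline{E})^{-1}\Lambda^{-1}\lambda$ --- this is where the matrix $A=(\overline{E}-Id_I)^{-1}$ introduced just before the statement enters naturally. I would then take the Lyapunov function $V(x):=\tfrac{1}{2}\|x-x^{*}\|^{2}$ and differentiate along trajectories of \eqref{eq:diffeq} to obtain
\[
\dot V(x)=(x-x^{*})^{T}\bigl(\Lambda^{-1}\overline{E}-Id_I\bigr)(x-x^{*})=(x-x^{*})^{T}\!\left[\tfrac{1}{2\Lambda}\bigl(\overline{E}+\overline{E}^{T}\bigr)-Id_I\right]\!(x-x^{*}).
\]
Under the spectral hypothesis on $\tfrac{1}{2}(\overline{E}+\overline{E}^{T})$, combined with the $-Id_I$ contribution coming from the $-x_i(t)$ term, this quadratic form is strictly negative whenever $x\neq x^{*}$. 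Hence $V$ is a strict Lyapunov function on the invariant set $[0,1]^{I}$, and $x^{*}$ is globally asymptotically stable.

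The delicate point --- and what I expect to be the main obstacle --- is the transfer from ODE stability to the displayed limit $\lim_{k\to\infty}x_i(k)=x_{i}^{*}$. For a constant step-size scheme one does \emph{not} in general obtain almost-sure convergence to the deterministic equilibrium: the invariant law of the chain concentrates on $x^{*}$ only as $\delta\downarrow 0$, with residual fluctuations of order $\sqrt{\delta}$. The hypothesis $\delta\ll 1$ in the statement must therefore be read in this asymptotic sense. The conclusion then follows from the shadowing and weak-convergence results of \textcite{borkar2008stochastic}: fix an arbitrary neighborhood of $x^{*}$, choose $\delta$ small enough that the interpolated trajectory tracks an ODE solution over an arbitrarily long window with high probability so that the iterates are eventually trapped in that neighborhood, and then let the neighborhood shrink. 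Making this step quantitative (or, alternatively, formulating the claim as convergence in distribution of the stationary law) is where the bulk of the remaining technical care is needed.
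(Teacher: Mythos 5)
Your proposal follows essentially the same route as the paper's proof: recast the update as a constant step-size stochastic approximation with martingale-difference noise, establish uniqueness of the rest point via strict diagonal dominance, prove stability through the negative definiteness of the symmetric part of the drift matrix, and invoke Borkar's constant step-size theorem. The only differences are that you spell out the Lyapunov computation that the paper delegates to a citation of \textcite{borkar2008stochastic}, and you are more explicit than the paper about the fact that the constant step-size conclusion should be read as concentration near $x^{*}$ as $\delta\downarrow 0$ rather than almost-sure convergence --- both of which are refinements of, not departures from, the published argument.
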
 
  
\begin{proof}[Proof of Proposition \ref{proposition:convergence}]
	Our proof is constituted of three steps. In the first, we rewrite $\textbf{x}(n)$ as a stochastic approximation. In the second, we prove the uniqueness of the rest point. In the third, the stability of the unique rest point is proved by using \textcite[][chap.10, p.131]{borkar2008stochastic}. We conclude by using Theorem 3 of \textcite[][p.106]{borkar2008stochastic}.

\textbf{Step 1: } For each $i$, the evolution of $x_{i}(k)$ is equivalent to the following stochastic approximation: 
\begin{eqnarray}\nonumber
x_{i}(k+1)&=&  x_{i}(k)+\delta\left(Y_{i}(k)-x_{i}(k)\right)\\\nonumber
&=& x_{i}(k)+\delta\left(E[Y_{i}(k)])-x_{i}(k)\right)+\delta\left(Y_{i}(k)-E[Y_{i}(k)])\right).
\end{eqnarray}
We observe that for each person $i$:  
$$
M_{i}(n) :=Y_{i}(k)-E[Y_{i}(k)]$$
is a martingale difference sequence of zero mean (a definition can be found in the appendix of \textcite{borkar2008stochastic}). Thus, for each $i$, 
$x_{i}(k)$ is a stochastic approximation.\\

\textbf{Step 2:} From the fact that $x_{i}(k)$ is a stochastic approximation, we study the asymptotic behavior of the associated differential equations:
\begin{eqnarray}
	\label{eq:diffeq1}
	\dot{x}_{i}(t)&=&\lambda_{i}\Lambda^{-1}+\Lambda^{-1}\sum_{c=1}^C\alpha_{ic}\sum_{j=1}^Ie_{jic}x_{j}(t)-x_{i}(t),  \\\nonumber
	&&x_{i}(0)=x_{i}^0.
\end{eqnarray}

Rest points of \eqref{eq:diffeq1} are the solutions of the following linear system:
\begin{equation}
	\label{eq:rest.point.fast.equation}
	\lambda\Lambda^{-1}+(\overline{E}-Id_I)\mathbf{x}=0
\end{equation}
where the $ij$-entry of $\overline{E}$ is equal to $\overline{e}_{ij}=\sum_{c=1}^{C}\alpha_{ic}e_{jic}$ and $Id_I$ is the identity matrix of dimension $I^2$.
We notice that $\overline{E}-Id_I$ is a strictly diagonally dominant matrix \parencite{horn2012matrix}. This is the reason why $\overline{E}-Id_I$ is invertible. We can conclude that \eqref{eq:rest.point.fast.equation} has a unique rest point given by:
\begin{equation}
	\label{eq:fast.rest.point}
	\mathbf{x}^*=(\overline{E}-Id_I)^{-1}(-\lambda\Lambda^{-1}).
\end{equation}

\textbf{Step 3:} From the fact the matrix $\frac{1}{2}[(\overline{E}-Id_I)+(\overline{E}-Id_I)^T]$ has negative Eigenvalues, $\mathbf{x}^*$ is the unique stable point of \eqref{eq:diffeq1} \parencite[see][chap.10, p.131]{borkar2008stochastic}.

Finally, we apply theorem 3 p.106 of  \cite{borkar2008stochastic} to show that $\textbf{x}(n)$ (whose components are given by (\ref{eq:discreteevolutionofx})) converges to $\textbf{x}^*$.
\end{proof}
%%%%%%%%%%%%%%%%%%%%%%%%%%%%%%
  
 In this section, we developed a framework that models the evolution of individual opinion occurring on a multiplex social networks. We provided a mathematical expression of the asymptotic behavior of opinion dynamics. On this base, in the next section, we define the multiplex opinion problem and derive a multiplex centrality measure.

%%%%%%%%%%%%%%%%%%%%%%%%%%%%%%%%%%%%%%%%%%%%%%%%%%%%%%%%%%%%%%%%%%%%%%%%%
\section{An Opinion-Based Multiplex Centrality}
\label{sec:centrality}
In our model, the parameter $\lambda_{i}$ determines how much external influence person $i$ receives. Our centrality measure is based on the control of this parameter. We assume that the centrality of a person in the whole multiplex network depends on the amount of external influence one should exert on him in order to increase the global opinion level of the whole social group. According to this statement, finding the most central person amounts to finding the person whose stimulation (through external influence) maximizes the total opinion. So, in order to process our centrality, we need first to solve an optimization problem. In the rest of this section, we describe the problem and derive a mathematical closed form of the solution.

%%%%%%%%%%%%%%%%%%%%%%%%%%%%%%%%
\subsection{Opinion Maximization Problem}
\label{sec:problem}
We want our centrality to quantify how important an individual is, in term of resource allocation. For this purpose, we need to maximize the opinion profile $\mathbf{x}$ by controlling the amount of external influence $\lambda:=(\lambda_{1},\ldots,\lambda_{I})$ spent on the individuals of the social group. We propose a first centrality measure, which we call \textit{Naive Opinion Centrality}, and define it as follows:
 \begin{definition} Let $U(\mathbf{x}^*(\lambda)):=\sum_{i=1}^Ix_{i}^*(\lambda)$ the utility function with which we compute the opinion centrality (note it may also be non-linear: we discuss this and provide examples later). The \textbf{Naive Opinion Centrality} is the vector $\lambda^{NO}$ which is the optimum of the following problem:
   \begin{equation}
		\label{eq:multiplex opinion problem}
		\max_{\lambda\geq \mathbf{0}}U(\mathbf{x}^*(\lambda))
	\end{equation}
	where, for each $i$ and $c$, $x^*_{i}(\lambda)$ is solution of:
	\begin{equation}
    	\label{eq:constraintopinion1}
		\lambda_{i}\Lambda^{-1}+\Lambda^{-1}\sum_{c=1}^C\alpha_{ic}\sum_{j=1}^Ie_{jic}x^*_{j}(\lambda)-x_{i}(\lambda)=0,
	\end{equation}
	and the total intensity of the external influence is constrained by a so-called budget $R>0$:  
	\begin{equation}
    	\label{eq:budgetconstraint}
		\sum_{i=1}^I\lambda_{i}= R.
	\end{equation}
	We call this optimization problem the \textbf{Opinion Maximization Problem (OMP)}.
\end{definition}

According to the previous definition, our centrality defines an index, per person, which is equal to the amount of resources that an external influencer needs to spend on each person such that he will maximize the opinion of the whole social group, regarding some topic or product of interest. Therefore, if a node is twice as central as another, this can be interpreted as the need to invest twice the amount of resources in this node in order to obtain optimal diffusion. In addition to creating a ranking measure, the opinion centrality can also be used to design a targeting strategy. The budget $R$ is used to prevent the external influencer from targeting every person in the social group, which would lead to a degenerate situation. 

This paper is not the first one making the relation between targeting strategies and centrality measures. In \parencite{borkar2011controlled, borkar2010manufacturing, Bimpikis_competingover}, the authors maximize the opinion in a uniplex social network, by spending resources on each person. Our paper has two majors differences with these works. The first one is that, in each of these papers, the authors assume that the persons are embedded in only one social network, and not several ones as proposed in our framework. The second difference is that the authors only mention a connection to the notion of centrality, but never study it as such.

We now provide an interpretation of the mathematical expression of the utility $U(\mathbf{x}):=\sum_{i=1}^Ix_{i}^*(\lambda)$. Behind it, there is the assumption that the opinions of distinct persons are perfect substitutes, i.e. can be exchanged without information loss. More precisely, an influencer does not distinguish the opinion of two distinct persons, and so an increase in the opinion of one of them decreases his interest for the other. This hypothesis is expressed through a linear utility $U(\mathbf{x})$, i.e. we model the network opinion as the sum of all individual opinions. It could be considered as strong, however our whole method can be extended by using other non linear utility functions, such as: 
\begin{eqnarray} 
    \label{eq:complementary topic}
	U(\mathbf{x})&=&\min_{i\in\{1,\ldots,I\}} w_{i}x_{i},\\   
    \label{eq:cobb douglas}
	U(\mathbf{x})&=&\prod_{i=1}^Ix_{i}^{w_{i}}
\end{eqnarray}
 
Both functions \eqref{eq:complementary topic} and \eqref{eq:cobb douglas} are related to the theory of utility function in economics \parencite{mas1995microeconomic}. The first, Eq. \eqref{eq:complementary topic}, can be used to describe perfect complements between opinions. It allows modeling the network opinion as the smallest individual opinion. The second, Eq. \eqref{eq:cobb douglas}, is the Cobb-Douglas function, which is a good trade-off between complement and substitute effects: the network opinion is represented as the product of individual opinions.

% Note the problem can be extended in several ways. The strategy behind $U(\mathbf{x})$ is the identification of influential persons and influential topics. Methods applied in this paper can be directly used with more complex utility functions such that: 
% \begin{eqnarray} \label{eq:substitutable topic}
% 	U(\mathbf{x})&=&\sum_{i=1}^I\sum_{c=1}^Cw_{ic}x_{ic}\\\label{eq:complementary topic}
% 	U(\mathbf{x})&=&\min_{i,c} w_{ic}x_{ic},\\   \label{eq:cobb douglas}
% 	U(\mathbf{x})&=&\prod_{i=1}^I\prod_{c=1}^Cx_{ic}^{w_{ic}}
% \end{eqnarray}
% Each of the proposed utility model different preferences of the company over the topics. The first utility \eqref{eq:substitutable topic} models perfect substitutes between topics. The second utility \eqref{eq:complementary topic} can be used to describe perfect complements between topics. The third utility is the Cobb-Douglas function which is a good trade-off between complement and substitute effects. \textcolor{red}{(Pas bien compris ce paragraphe, faudra qu'on en discute)} 

%%%%%%%%%%%%%%%%%%%%%%%%%%%%%%%%
\subsection{Opinion Centrality Measure}
In the previous subsection, we define the Naive Opinion Centrality and give an interpretation of this measure. However, we need to ask the following question: Is the opinion centrality a relevant ranking measure? The answer is negative, and it is not hard to see that when considering the following lemma:
\begin{lemma}
	\label{lem:expression.naive.opinion} 
    If $\max_{j} \{-\sum_{i=1}^Ia_{ij}\}$ is unique and if for all $i$, $\sum_{j=1}^I\overline{e}_{ij}<1$, then:
	\begin{equation}
		\lambda_j^{NO}=\left\{\begin{array}{lll}
			R&\text{if }j=\text{argmax }\{-\sum_{i=1}^Ia_{ij}\},&\\
			0&\text{otherwise.}&
		\end{array}\right.
	\end{equation}
\end{lemma}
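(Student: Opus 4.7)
The plan is to reduce the Naive Opinion Centrality problem to a linear program over the simplex. The first step is to plug in the closed-form fixed point from Proposition~\ref{proposition:convergence}: under the stated hypothesis $\sum_j \overline{e}_{ij}<1$, the matrix $A=(\overline{E}-Id_I)^{-1}$ exists, and the rest point is
\begin{equation*}
\mathbf{x}^*(\lambda) \;=\; -\,\Lambda^{-1}\,A\,\lambda .
\end{equation*}
Summing coordinates gives $U(\mathbf{x}^*(\lambda)) = \Lambda^{-1}\sum_{j=1}^{I} c_j\,\lambda_j$, where I set $c_j := -\sum_{i=1}^I a_{ij}$.

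The second step is the key observation that, although $\Lambda=\sum_{j,d}\lambda_j+\alpha_{jd}$ depends on $\lambda$ in general, under the budget constraint \eqref{eq:budgetconstraint} it collapses to the positive constant $\Lambda = R + \sum_{j,d}\alpha_{jd}$. Hence the OMP is equivalent to the linear program
\begin{equation*}
\max_{\lambda\geq 0}\ \sum_{j=1}^I c_j\lambda_j \qquad \text{subject to}\qquad \sum_{j=1}^I \lambda_j = R ,
\end{equation*}
whose feasible region is the scaled simplex $R\cdot\Delta^{I-1}$.

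The third step is the standard LP fact that a linear functional on a simplex attains its maximum at a vertex, namely at the unit vector $e_{j^\star}$ with $j^\star\in\arg\max_j c_j$. Under the uniqueness assumption $\arg\max_j\{-\sum_i a_{ij}\}=\{j^\star\}$, this vertex is the unique optimizer, yielding $\lambda^{NO}_{j^\star}=R$ and $\lambda^{NO}_j=0$ for $j\neq j^\star$, which is precisely the claim.

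The only genuinely subtle point is justifying the linearization in the second step, and noting in passing that the coefficients $c_j$ are non-negative, so the LP is well posed: from $-A = (Id_I-\overline{E})^{-1} = \sum_{k\geq 0}\overline{E}^{\,k}$ (a Neumann series that converges under the strict row-sum condition $\sum_j \overline{e}_{ij}<1$), the matrix $-A$ has non-negative entries, and therefore every $c_j\geq 0$. Everything else is elementary vertex-optimization on a simplex.
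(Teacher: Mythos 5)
Your proof is correct and follows essentially the same route as the paper: substitute the closed-form rest point $\mathbf{x}^* = -\Lambda^{-1}A\lambda$, observe that $\Lambda$ is constant on the budget set so the problem becomes a linear program over the scaled simplex, and conclude that the maximum sits at the vertex indexed by the unique argmax. You are in fact a bit more explicit than the paper on two points it leaves implicit---why $\Lambda^{-1}$ is a constant under the constraint $\sum_i\lambda_i = R$, and the non-negativity of $-A=(Id_I-\overline{E})^{-1}$ via the Neumann series---but these are refinements of the same argument, not a different one.
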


\begin{proof}[Proof of Lemma \ref{lem:expression.naive.opinion}]
Because the matrix $A$ is diagonally dominant, the solution of \eqref{eq:constraintopinion1} is given by:
\begin{equation}
	\mathbf{x}^*=A(-\lambda\Lambda^{-1}). 
\end{equation}
Thus we can rewrite the Naive Opinion Centrality \eqref{eq:multiplex opinion problem} as follows:
   \begin{equation}
   		\label{eq:multiplex opinion problem 2}
		\max_{\lambda\geq \mathbf{0}}U(\mathbf{x}(\lambda))
        := -\Lambda^{-1}\sum_{i=1}^I\sum_{j=1}^Ia_{ij}\lambda_{j},
\end{equation}
subject to:
\begin{equation}
		\label{eq:budgetconstraint2}
		\sum_{i=1}^I\lambda_{i}= R.
\end{equation}

$\Lambda^{-1}$ is constant and this implies that \eqref{eq:multiplex opinion problem 2} is a linear program. The conclusion follows by using the first order optimality condition \parencite{boyd2004convex}.
\end{proof}

According to this lemma, the naive opinion centrality, $\lambda_j^{NO}$, will always be equal to $0$ if $j$ is different from $\text{argmax }\{-\sum_{i=1}^ Ia_{ij}\}$ and so it is useless in term of ranking. The intuition behind this result is that the opinion strategy is motivated by finding the best person to target. However, in the context of centrality measures, one aims at ranking all the nodes, rather than identifying a single one fulfilling some constraints. Therefore comes the question: Can we modify the OMP such that the associated opinion centrality is appropriate for ranking? We can propose a positive answer to this question, inspired by machine learning theory. For this purpose, we make use of a regularization term, which allows us to define our final measure, which we call \textit{Opinion Centrality}. 

\begin{definition} The \textbf{Opinion centrality} is the vector $\lambda^{O}$ which is the optimum of the following problem:
   \begin{equation}
   		\label{eq: regularized multiplex opinion problem}
		\max_{\lambda\geq \mathbf{0}}U^R(\mathbf{x}^*(\lambda)):=\sum_{i=1}^Ix^*_{i}(\lambda) \quad -\underbrace{\frac{\gamma}{2}\sum_{i=1}^I(\lambda_i)^2}_{\text{Regularization function}}
	\end{equation}
	where $\gamma\in\mathbb{R}_+$ is the regularization coefficient, and the total intensity of the external influence is constrained by $R>0$:  
	\begin{equation}
    	\label{eq:budget constraint ROMP}
		\sum_{i=1}^I\lambda_{i}= R.
	\end{equation}
	We call this optimization problem the \textbf{Regularized Opinion Maximization Problem (ROMP)}.
 \end{definition}

We are now able to compute a mathematical expression of the opinion centrality. 
\begin{proposition}
	\label{prop:expression.opinion.centrality}
	If for all $i$, $\sum_{j=1}^I\overline{e}_{ij}<1$, and if $\gamma$ is such that
	\begin{equation}
    	\label{eq:positivity}
		\gamma>(\Lambda R)^{-1}I^2(\max\{a_{ij}\}-\min\{a_{ij}\}),
	\end{equation}
	then for all $j$
	\begin{equation}
		\lambda_j^{O}=RI^{-1}+\gamma^{-1}I^{-1}\Lambda^{-1}\sum_{i=1}^I\sum_{j=1}^Ia_{ij}-\gamma^{-1}\Lambda^{-1}\sum_{i=1}^Ia_{ij}.
	\end{equation}
\end{proposition}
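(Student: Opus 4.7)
The plan is to reduce the Regularized Opinion Maximization Problem to a strictly concave quadratic program in $\lambda$ with a single equality constraint, and to show that the non-negativity constraints are inactive under the hypothesis on $\gamma$, so that the Lagrangian stationarity conditions yield a closed form.

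First, I would reuse the closed form obtained in the proof of Lemma \ref{lem:expression.naive.opinion}. Since $\sum_{j}\overline{e}_{ij}<1$ for all $i$, the matrix $\overline{E}-Id_I$ is strictly diagonally dominant, hence invertible, and the constraint \eqref{eq:constraintopinion1} gives $\mathbf{x}^{*}(\lambda)=-\Lambda^{-1}A\lambda$ with $A=(\overline{E}-Id_I)^{-1}$. Summing components,
\begin{equation*}
\sum_{i=1}^{I}x_{i}^{*}(\lambda)=-\Lambda^{-1}\sum_{j=1}^{I}\Bigl(\sum_{i=1}^{I}a_{ij}\Bigr)\lambda_{j},
\end{equation*}
so the objective $U^{R}(\mathbf{x}^{*}(\lambda))$ becomes a linear-minus-quadratic function of $\lambda$. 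Setting $b_{j}:=-\sum_{i=1}^{I}a_{ij}$, the ROMP reduces to maximizing $\Lambda^{-1}\sum_{j}b_{j}\lambda_{j}-\tfrac{\gamma}{2}\sum_{j}\lambda_{j}^{2}$ subject to $\sum_{j}\lambda_{j}=R$ and $\lambda\geq 0$. Since $\gamma>0$, the objective is strictly concave and the feasible set is a non-empty polytope, so the optimum exists and is unique.

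Next, I would tentatively drop the non-negativity constraints and solve the resulting equality-constrained problem by Lagrangian stationarity. Introducing a multiplier $\mu$ for $\sum_{j}\lambda_{j}=R$, the first-order condition $\Lambda^{-1}b_{j}-\gamma\lambda_{j}-\mu=0$ gives $\lambda_{j}=(\Lambda^{-1}b_{j}-\mu)/\gamma$. Summing over $j$ and using the budget constraint determines $\mu=\bigl(\Lambda^{-1}\sum_{k}b_{k}-\gamma R\bigr)/I$, and substituting back yields
\begin{equation*}
\lambda_{j}=\frac{R}{I}+\gamma^{-1}\Lambda^{-1}b_{j}-\gamma^{-1}I^{-1}\Lambda^{-1}\sum_{k=1}^{I}b_{k},
\end{equation*}
which is exactly the announced formula once one rewrites $b_{j}=-\sum_{i}a_{ij}$ and $\sum_{k}b_{k}=-\sum_{i,k}a_{ik}$.

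The main point to verify, and the place where the hypothesis \eqref{eq:positivity} is used, is that this candidate is feasible, i.e. that every $\lambda_{j}$ is non-negative, so that the KKT multipliers for $\lambda_{j}\geq 0$ vanish and the candidate is the genuine global maximizer of the strictly concave program. The only negative contribution comes from the bracket $b_{j}-I^{-1}\sum_{k}b_{k}$; bounding it crudely via $|b_{j}|\leq I\max_{ij}\{a_{ij}\}$ lower bound and symmetric upper bound, one obtains
\begin{equation*}
b_{j}-I^{-1}\sum_{k}b_{k}\geq -I\bigl(\max\{a_{ij}\}-\min\{a_{ij}\}\bigr),
\end{equation*}
so $\lambda_{j}\geq R/I-\gamma^{-1}\Lambda^{-1}I(\max\{a_{ij}\}-\min\{a_{ij}\})$, which is non-negative exactly under the stated condition $\gamma>(\Lambda R)^{-1}I^{2}(\max\{a_{ij}\}-\min\{a_{ij}\})$. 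The main obstacle is really just this bookkeeping step: getting the $I^{2}$ factor right and making sure the bound is uniform in $j$; the rest is a textbook quadratic-programming computation.
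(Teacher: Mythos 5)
Your proposal is correct and follows essentially the same route as the paper: reduce the ROMP to a concave quadratic program via $\mathbf{x}^{*}(\lambda)=-\Lambda^{-1}A\lambda$, write the Lagrangian stationarity condition for the budget constraint, sum over the index to eliminate the multiplier, and substitute back. Your only addition is to spell out the bound $b_{j}-I^{-1}\sum_{k}b_{k}\geq -I\bigl(\max\{a_{ij}\}-\min\{a_{ij}\}\bigr)$ showing that condition \eqref{eq:positivity} really does make the unconstrained candidate feasible --- a step the paper asserts without detail --- and your bookkeeping there is right.
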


\begin{proof}[Proof Proposition \ref{prop:expression.opinion.centrality}]
Because the matrix $A$ is diagonally dominant, the solution of \eqref{eq:constraintopinion1} is given by:
\begin{equation}
\mathbf{x}^*=A(-\lambda\Lambda^{-1}). 
\end{equation}
Thus we can rewrite the Regularized Opinion Centrality \eqref{eq: regularized multiplex opinion problem} as follows:
   \begin{equation}
   		\label{eq:multiplex opinion problem 3}
		\max_{\lambda\geq \mathbf{0}}U^R(\mathbf{x}^*(\lambda)):=-\Lambda^{-1}\sum_{i=1}^I\sum_{j=1}^Ia_{ij}\lambda_{j}-\frac{\gamma}{2}\sum_{i=1}^I(\lambda_i)^2,
	\end{equation}
subject to:
\begin{equation}
		\label{eq:budgetconstraint1}
		\sum_{i=1}^I\lambda_{i}= R.
	\end{equation}
The first order optimality condition \parencite{boyd2004convex} is as follows:
It exists $\beta\in \mathbb{R}$ such that $\beta$ and $\mathbf{x}^O$ are solution of the following system:
\begin{eqnarray}
	-\Lambda^{-1}\sum_{i=1}^Ia_{ij}-\gamma\lambda_i^O-\beta&=&0,\\
	\sum_{i=1}^I\lambda_{i}^O= R&=&0,
\end{eqnarray}
which is equivalent by taking the sum over $i$ in the first equation, to,
\begin{eqnarray}
	-\Lambda^{-1}\sum_{i=1}^Ia_{ij}-\beta&=&\gamma\lambda_i^O,\\
	-\Lambda^{-1}\sum_{i=1}^I\sum_{j=1}^Ia_{ij}-I\beta&=&\gamma R.
\end{eqnarray}
Now we can deduce that:
\begin{equation}
	-\beta=I^{-1}\gamma R+I^{-1}\Lambda^{-1}\sum_{i=1}^I\sum_{j=1}^Ia_{ij}.
\end{equation}
and by injecting $-\beta$ in the previous equations we get:
\begin{equation}
	\lambda_i^O= RI^{-1}+\gamma^{-1}I^{-1}\Lambda^{-1}\sum_{i=1}^I\sum_{j=1}^Ia_{ij}-\gamma^{-1}\Lambda^{-1}\sum_{i=1}^Ia_{ij}\overset{\eqref{eq:positivity}}{>}0,
\end{equation}
which concludes the proof.
\end{proof}

Even if we find a mathematical expression of the opinion centrality, it is still hard to interpret the impact of $\alpha$, $R$ and $\gamma$ over $\lambda^O$. First we observe that for each $j$ and $j'$:
\begin{equation}
	\mid\lambda_j^{O}-\lambda_{j'}^{O}\mid=\gamma^{-1}\Lambda^{-1}\mid\sum_{i=1}^Ia_{ij}-\sum_{i=1}^Ia_{ij'}\mid.
\end{equation}
Therefore the difference between two opinion centrality values will increase the more $\gamma$ and $R$ decrease ($R$ appears in $\Lambda^{-1}$). 
%\todoVL{The difference increases when the budget decreases? Seems counter-intuitive} 
Despite this fact, the ranking is independent of $\gamma$ and $R$.

Concerning $\alpha$, we propose to study its behaviour in a particular context. Consider, for each $i$ and each $c$,  $\alpha_{ic}=\hat{\alpha}$. Then for each $i$, $x_i^*(\lambda)$ is solution of 
\begin{eqnarray}
	& & \lambda_{i}(R+CI\hat{\alpha})^{-1}+(R+CI\hat{\alpha})^{-1}C\hat{\alpha}\sum_{j=1}^Ie_{jic}x_{j}^*(\lambda)-x_{i}^*(\lambda)=0,\\
	& \Leftrightarrow& \lambda_{i}+C\hat{\alpha}\sum_{j=1}^Ie_{jic}x_{j}^*(\lambda)-(R+CI\hat{\alpha})x_{i}^*(\lambda)=0,\\
	& \overset{\hat{\alpha} \rightarrow +\infty}{\Rightarrow}& \lambda_{i}-(R+CI\hat{\alpha})x_{i}^*(\lambda)=0,\\
	& \Rightarrow& \lambda_{i}(R+CI\hat{\alpha})^{-1}=x_{i}^*(\lambda),
\end{eqnarray}
and we can conclude that 
\begin{equation}
	\lambda_i^O=RI^{-1}.
\end{equation}
In other words, in this context, and for a large $\hat{\alpha}$, the optimal strategy consists in investing an equal amount of external influence in all nodes. This is due to the fact the external influence becomes negligible compared to the internal influence.

\section{Experimental Validation}
\label{sec:experiments}
To assess the behavior of the opinion centrality, we consider first a toy problem designed to study some cases of particular interest for our measure (Section \ref{sec:ToyPb}), and then a collection of real-world multiplex networks (Section \ref{sec:RealWorld}). One could compare the measures simply based on the values they produce. However, centrality measures are generally used to compare nodes \textit{within} a single network, in which case the values themselves are not relevant: one should consider their rank instead \parencite{Sola2013}. We consequently decided to emphasize this aspect in our experimental assessment.

%%%%%%%%%%%%%%%%%%%%%%%%%%%%%%%%%%%
\subsection{Toy Problem}
\label{sec:ToyPb}
We first propose to explore the effect of the different parameters over a specific  multiplex network, called the \textit{barrel network}. The original version is in fact uniplex, constituted of two star networks connected by their centers \parencite{Bimpikis_competingover}. We obtain a multiplex network simply by replicating the same structure to form several layers, as shown in Figure~\ref{fig:barrelnetwork}. 

\begin{figure}[htb]
	\center
	\includegraphics[width=0.60\textwidth]{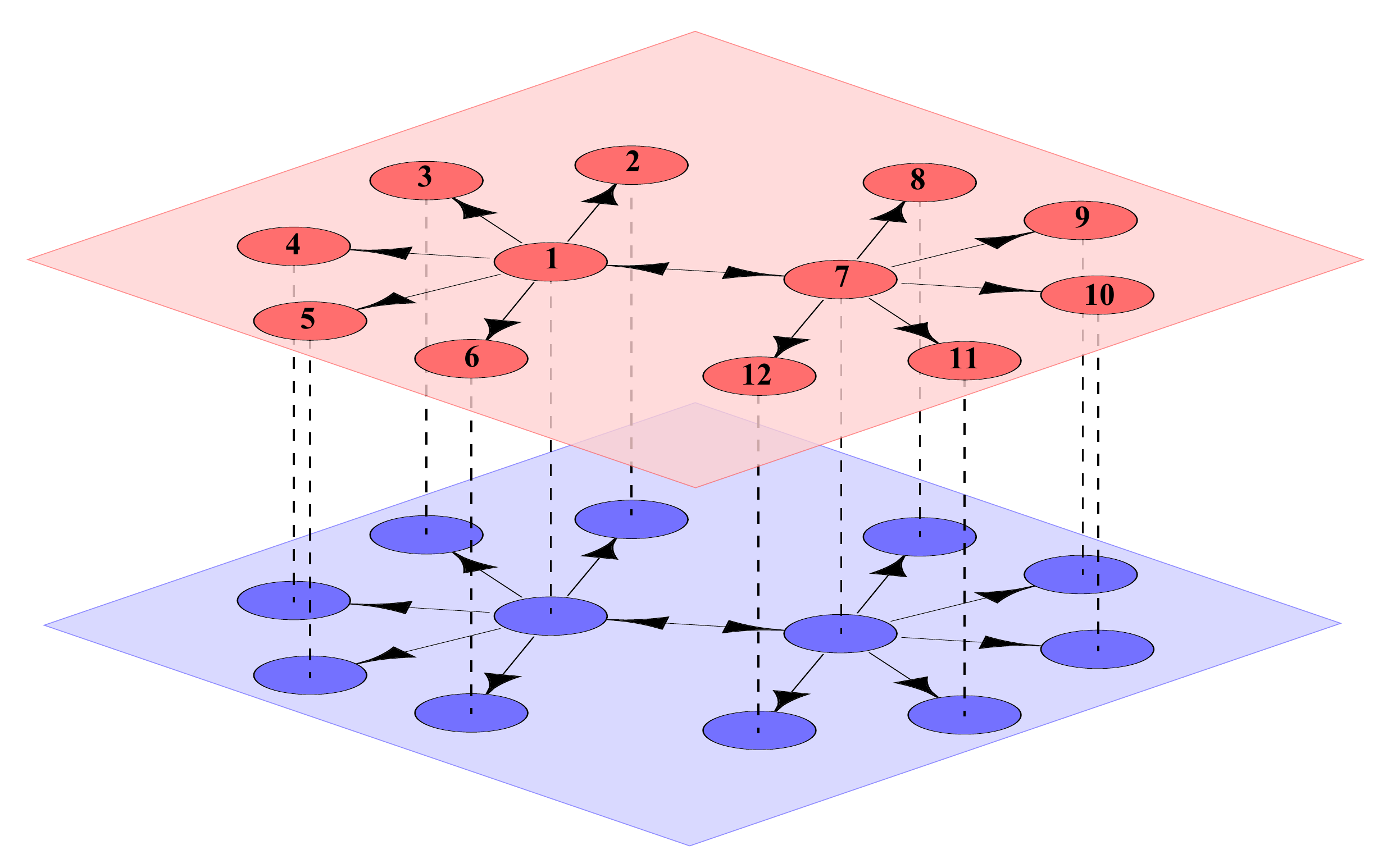}
	\caption{Example of a multiplex barrel network containing $12$ nodes and $2$ layers.}
	\label{fig:barrelnetwork}
\end{figure}

The centers, which correspond to nodes $1$ and $I/2+1$, are called \textit{Hubs}, whereas the rest of the network, i.e. nodes $2,..., I/2,I/2+2,..., I$, are the \textit{Leaves}. We assume that for each $i$ and each $c$, the $ij$-entry of matrix $E_c$ is given by:
\begin{equation}
	e_{jic}=\left\{\begin{array}{lll}
		& e_{0c} &\text{ if } (i,j)=(1,I/2+1) \text{ or } =(I/2+1,1)\\ 
		& e_{1c} &\text{ if } j=1 ,\; i\in\{2,\ldots,I/2\}\\
		& e_{2c} &\text{ if } j=I/2+1 ,\; i\in\{I/2+2,\ldots,I\}\\
		& 0 & \text{otherwise}.
		\label{eq:ToyModelEdges}
	\end{array}\right.
\end{equation}
In other words, for a given layer $c$, we consider three types of edges: $e_{0c}$ corresponds to the weight of the edges connecting the hubs, whereas $e_{1c}$ and $e_{2c}$ are the hub-to-leaf weights for the two stars constituting the layer.

We define $e_j'=\Lambda^{-1}\sum_{c=1}^C\alpha_{ic}e_{jc}$ for each $j\in \{0,1,2\}$. This variable can be seen as the average effect of a specific type of edge (as defined in \eqref{eq:ToyModelEdges}) in the overall multiplex network. In this particular case, we can get a mathematical closed expression of the opinion degree and PageRank centrality measures (see Table~\ref{tab:cent barrel}). This computation can be done independently from matrix $\alpha$. We compute the degrees and the PageRank centrality over the weighted aggregated network, which is obtained by collapsing the layers to get a uniplex network. The computation of the opinion centrality is made without taking into account $R$, with $\gamma=1$ and by dividing by $\Lambda$. This will not change the order of the nodes, since the budget (that can be seen as a normalization constraint) and the factor $\gamma$ do not affect node ordering. 

\begin{table}[htb]
	\center
	\begin{tabular}{l c c c c}
		\hline
		Node & \;Opinion centrality &\;Out-degree centrality & \;In-degree centrality & \;PageRank  \\
		\hline
		Hub $1$& $\frac{1+e_0'(1+e_2')+e_1'}{(1-e_0')^2}$ & $e_0'+e_1'$ & $e_0'$ & $\frac{1}{I {\left(5-4 \, e_{0}' \right)}}$\\
		Hub $I/2+1$\; & $\frac{1+e_0'(1+e_1')+e_2'}{(1-e_0')^2}$ & $e_0'+e_2'$ & $e_0'$ & $\frac{1}{I {\left(5-4 \, e_{0} \right)}}$\\
		Leave $2,\cdots, I/2$ & 1 & 0 & $e_1'$ & $\frac{-4 \, e_{0}' + 4 \, e_{1}' +5}{5 \, I {\left(5-4 \, e_{0}' \right)}}$ \\
		Leave $I/2+2,\cdots, I$ & 1 & 0 & $e_2'$ & $\frac{-4 \, e_{0}' + 4 \, e_{2}' + 5}{5 \, I {\left(5-4 \, e_{0}' \right)}}$ \\
		\hline
	\end{tabular}
	\caption{Expressions of the centrality measures for the barrel network.}
	\label{tab:cent barrel}
\end{table}

The first observation is that Leaves have always an opinion centrality equal to $1$, independently from any $e_j'$. The same can be said from the out-degree, except the Leaves have a centrality of $0$ instead. It is one major difference with both other centrality measures, in particular with PageRank, where each node including the Leaves depends on certain $e_j'$. The second observation is about the nature of the influence of the $e_j'$ over the opinion centrality and PageRank. The opinion centrality of the Hubs increases in $e_0'$, $e_1'$ and $e_2'$, and their PageRank increases in $e_0'$. Concerning the Leaves, as mentioned before their opinion centrality is not affected by $e_j'$, however their PageRank decreases in $e_0'$ and increases in $e_1'$ and $e_2'$. This highlights the fact the opinion measure behavior differs from the other measures', especially PageRank.

\begin{table}[htb]
	\center
	\begin{tabular}{l c c c c}
		\hline
		Scenario & \;Opinion centrality & \;Out-degree centrality & \;In-degree centrality & \;PageRank  \\
		\hline
		$e_0'\approx 0$, $e_1'\approx e_2'$\; & \eqref{eq:order1} & \eqref{eq:order} & \eqref{eq:order1}   & \eqref{eq:order1}\\
		$e_0'\approx 1$, $e_1'\approx e_2'$ & \eqref{eq:order1} & \eqref{eq:order} & \eqref{eq:order} & \eqref{eq:order}\\
		\hline
	\end{tabular}
	\caption{Node ranks for the considered centrality measures, in two different scenarios.}
	\label{tab:order barrel}
\end{table}

In order to understand better how the considered centrality measures differ in terms of node ranking, we investigate two scenarios, as described in Table~\ref{tab:order barrel}. In the first one we consider that Hubs are not linked together ($e_0'\approx 0$). In the second one, the inverse situation occurs and Hubs are strongly connected to each other ($e_0'\approx 1$). In order to simplify the computation, in both scenario we assume that $e_1'\approx e_2'$. In these cases, one can obtain one of two possible rankings:
\begin{equation}
	\label{eq:order}
	Hub \;1 \approx Hub \;I/2+1 > Leave\; 2,\cdots, I/2,I/2+2,\cdots, I,
\end{equation}
\begin{equation}
	\label{eq:order1}
	Hub \;1 \approx Hub \;I/2+1 < Leave\; 2,\cdots, I/2,I/2+2,\cdots, I.
\end{equation}

Besides the scenarios, Table~\ref{tab:order barrel} also shows which node ranking each measure provides. Out-degree behaves similarly in both cases, because the out-degree of Leaves is not influenced by the $e_j'$. For the other measures, on the contrary, there is a switch in ranking when changing the scenario. The in-degree and PageRank measures behave similarly, switching from the second \eqref{eq:order1} to the first \eqref{eq:order} ranking when $e'_0$ increases. This is due to the fact $e'_0$  captures the connection between nodes which is crucial in PageRank and in-degree centrality, as can be observed in their mathematical expression. The opinion centrality does not undergo any switch: the more $e'_0$ increases, the more it is important to invest in the hub to maximize the opinion propagation, even in the case when the hubs are weakly connected. In the considered network, the opinion centrality is negatively correlated (in terms of rank correlation) with the out-degree.

As explained before, our model requires to define a parameter $\alpha$ for the opinion measure, which represents the amount of influence internal to the network (by opposition to the external influence, which is determined when solving the optimization problem described in Section \ref{sec:problem}). This $\alpha$ parameter is an $I \times C$ matrix, where $\alpha_{ic}$ corresponds to user $i$ in layer $c$.
Here, to simplify our analysis, we consider the previously introduced $\hat{\alpha}$, where for each $i$ and each $c$, $\alpha_{ic}=\hat{\alpha}$. We are interested in understanding the evolution of the opinion centrality in a $2$-layer $12$-node barrel network similar to that depicted in Figure \ref{fig:barrelnetwork}, when $\hat{\alpha}$ grows. Figure~\ref{fig:centralitybarrelnetwork} displays this evolution for the different types of nodes, as a function of $\hat{\alpha} \in\{1,\ldots,100\}$, with $R=1$, $e_{0c}=0.1$, $e_{1c}=0.2$ and $e_{2c}=0.3$ for all $c$. We observe that the value of the opinion centrality per node converges to $0.08332089\approx\frac{1}{12}$. It therefore seems that we converge to a uniform strategy, in which each user receives the same amount of external influence. This observation confirms the result proved in the previous section. Moreover, we observe that the growth of $\hat{\alpha}$ does not affect the node ranks.

\begin{figure}[htb]
	\center
    \includegraphics[scale=0.4]{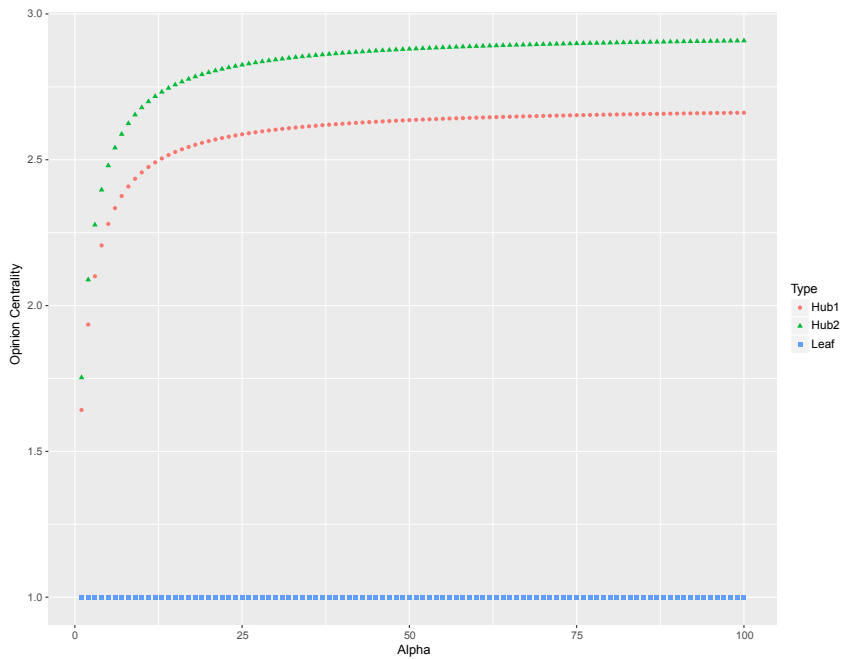}
	\caption{Evolution of the centrality of the node types from Figure~\ref{fig:barrelnetwork}, as a function of $\hat{\alpha}$.}
	\label{fig:centralitybarrelnetwork}
\end{figure}

%%%%%%%%%%%%%%%%%%%%%%%%%%%%%%%%%%%
\subsection{Real-World Data}
\label{sec:RealWorld}
The $20$ real-world networks selected for our evaluation are described in Table~\ref{tab:data}. They were retrieved from Nexus\footnote{\url{http://nexus.igraph.org/}}, the network repository of the igraph library, as well as from M. De Domenico's Web page\footnote{\url{http://deim.urv.cat/~manlio.dedomenico/data.php}}. Note there are two multiplex networks for the Kapferer dataset, corresponding to two successive acquisitions. We selected these networks in order to get data of various sizes, ranging from $10$ to $8215$ nodes, $35$ to $43129$ links, and $2$ to $339$ layers.

For each network, we process the opinion centrality, as well as multiplex variants of classic uniplex measures: Degree, PageRank, Eigenvector, HITS (hub and authority) and Katz centralities. The latter are computed using the software MuxViz\footnote{\url{http://muxviz.net/}} \parencite{Domenico2014a}, which implements tensorial generalizations. Each of them is processed for each layer by taking the multiplex information into account, and an overall measure is obtained by aggregating the resulting values over all layers. %The uniplex version of the same measures were also applied to each layer independently, as well as to an aggregated version of the multiplex network.
The \textsc{R} scripts we wrote to process the opinion centrality are publicly available online\footnote{\url{https://github.com/CompNet/MultiplexCentrality}}.

\begin{table}[htb]
	\center
    \setlength{\tabcolsep}{3mm}
%	\begin{minipage}{\textwidth}
		\begin{tabular}{llrr}
			\hline
			Name & \hspace{3mm}Reference & \hspace{3mm}Nodes & \hspace{3mm}Layers \\
			\hline
            Aarhus CS department & \cite{Magnani2013} & 61 & 5 \\
            Arabidopsis GPI & \cite{Domenico2015a} & 6980 & 7 \\
            C. Elegans GPI & \cite{Domenico2015a} & 3879 & 6 \\
            CKM physicians & \cite{Coleman1957} & 246 & 3 \\
            Drosophila GPI & \cite{Domenico2015a} & 8215 & 7 \\
			EU-Air transportation & \cite{Cardillo2013} & 450 & 37 \\
            FAO trade & \cite{Domenico2015a} & 183 & 339 \\
			Hepatitus C GPI & \cite{Domenico2014a} & 105 & 3 \\
            Human-HIV1 GPI & \cite{Domenico2014a} & 1005 & 5 \\
            Kapferer tailor shop & \cite{Kapferer1972} & 39 & 2 \\
			Knoke bureaucracies & \cite{Knoke1981} & 10 & 2 \\
            Lazega law firm & \cite{Lazega2001} & 71 & 3 \\
			London transport & \cite{Domenico2014b} & 369 & 13 \\
			Padgett Florentine families & \cite{Breiger1986} & 16 & 2 \\
            Pierre Auger collaborations & \cite{Domenico2014} & 514 & 16 \\
			Rattus GPI & \cite{Domenico2015a} & 2640 & 6 \\
			Roethlisberger bank & \cite{Roethlisberger1939} & 14 & 6\\
			Sampson monastery & \cite{Breiger1975} & 18 & 8 \\
			Thurmann office & \cite{Thurman1979} & 15 & 2 \\
            Wolfe primates & N/A & 20 & 2 \\
			\hline
		\end{tabular}
%		\vspace{-2\baselineskip}
%	\end{minipage}
	\caption{Multiplex Real-World Networks Selected for the Experimental Assessment}
	\label{tab:data}
\end{table}

%Note : according to \textcite{Domenico2015}, processing uniplex measures on each layer separately then combining them is "a viable solution" for our kind of graph (edge-colored). So the fact the inter-layer links only connect a node to its counterpart would not affect these centralities as much as with a more heterogeneous coupling.

%In \parencite{Sola2013} (paper where they also use an influence matrix), they make a weighted sum over layers to compare the multiplex centralities to the uniplex ones. Regarding the influence matrix, they test various values of their $0 \leq q \leq 1$ for two forms of matrices (one with off-diagonal $q$, the other with $q$ on the upper triangle and $q^2$ on the lower triangle). They use the Florentine family data to compare various values of $q$. Then they test their measures on artificially generated networks.

We first study how the $\alpha$ parameter described in Section \ref{sec:model} affects the opinion centrality rankings, and then how different these are from those obtained for the other measures. However, the multiplex networks available online do not include this information: only the structure of the network (i.e. the $E_c$ matrices, in the notation of our model). So, we proceed like for the toy problem, and consider $\hat{\alpha}$ values over a wide range: $]0;100]$. Our results show that, as also observed on the toy problem, when $\hat{\alpha}$ increases, the opinion centrality values increases (without considering the budget constraint). We use Spearman's correlation to compare the opinion centrality values processed with all considered $\hat{\alpha}$, and systematically obtain a maximal correlation of $1$. So, it turns out $\hat{\alpha}$ affects the opinion centrality values, but not their rank. The value of this parameter is consequently of no importance if one's objective is to rank the nodes. Note, however, that if $\alpha$ is provided as a part of the input data, it is likely to be heterogeneous (by opposition to the unique $\hat{\alpha}$ we used here as a substitute), and this can lead to ranking differences. In other words, $\alpha$ should be used when available, $\hat{\alpha}$ is only a way to model missing information. 

The values obtained with the opinion measure are generally distributed relatively similarly in the studied networks. These distributions are right-skewed, i.e. most of the nodes have a higher centrality, with respect to the considered network. Moreover, they are distributed relatively homogeneously around a characteristic value, which varies depending on the network. The dispersion around the characteristic value also depends on the data. The networks can be roughly grouped into $4$ variants of this distribution, which are illustrated by Figure~\ref{fig:opiniondistrib}. The distributions obtained for networks Aarhus, Kapferer, Knoke, Lazega, Padget, Sampson, Roethlisberger and Wolfe are similar to the top-left one ; networks CKM Phys. and Thurman to the top-right one ; networks Pierre Auger and London to the bottom-left one ; and networks Arabidopsis, C. Elegans, Drosophila, EU-Air, FAO trade, Hepatitus C, Human-HIV1 and Rattus to the bottom-right one. To ease the comparisons, the opinion centrality  of a node ($x$ axis) is expressed in terms of the \textit{proportion of the budget} it receives in the optimal solution, according to our model.

\begin{figure}[htb]
	\center
	\includegraphics[width=0.4\textwidth]{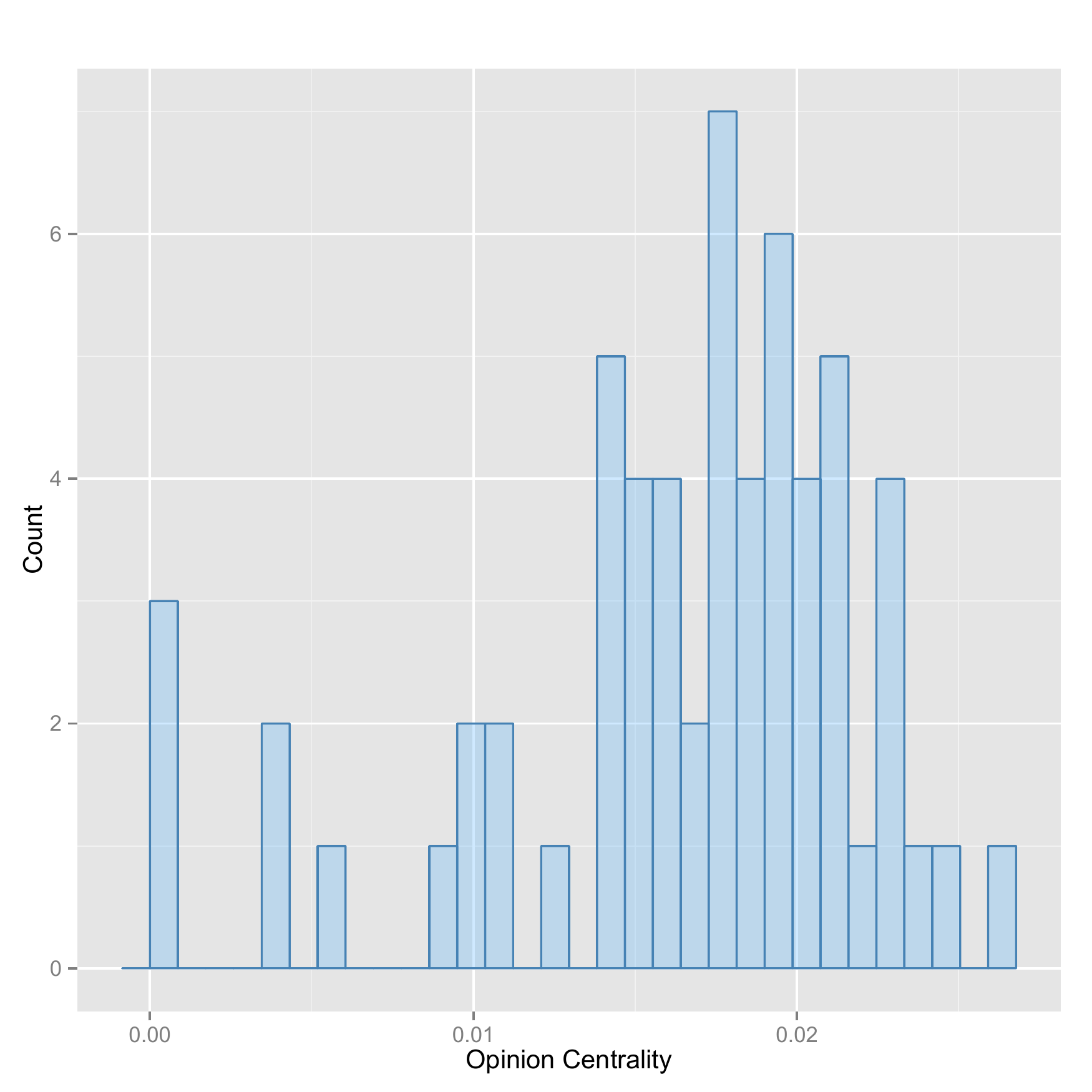}
	\includegraphics[width=0.4\textwidth]{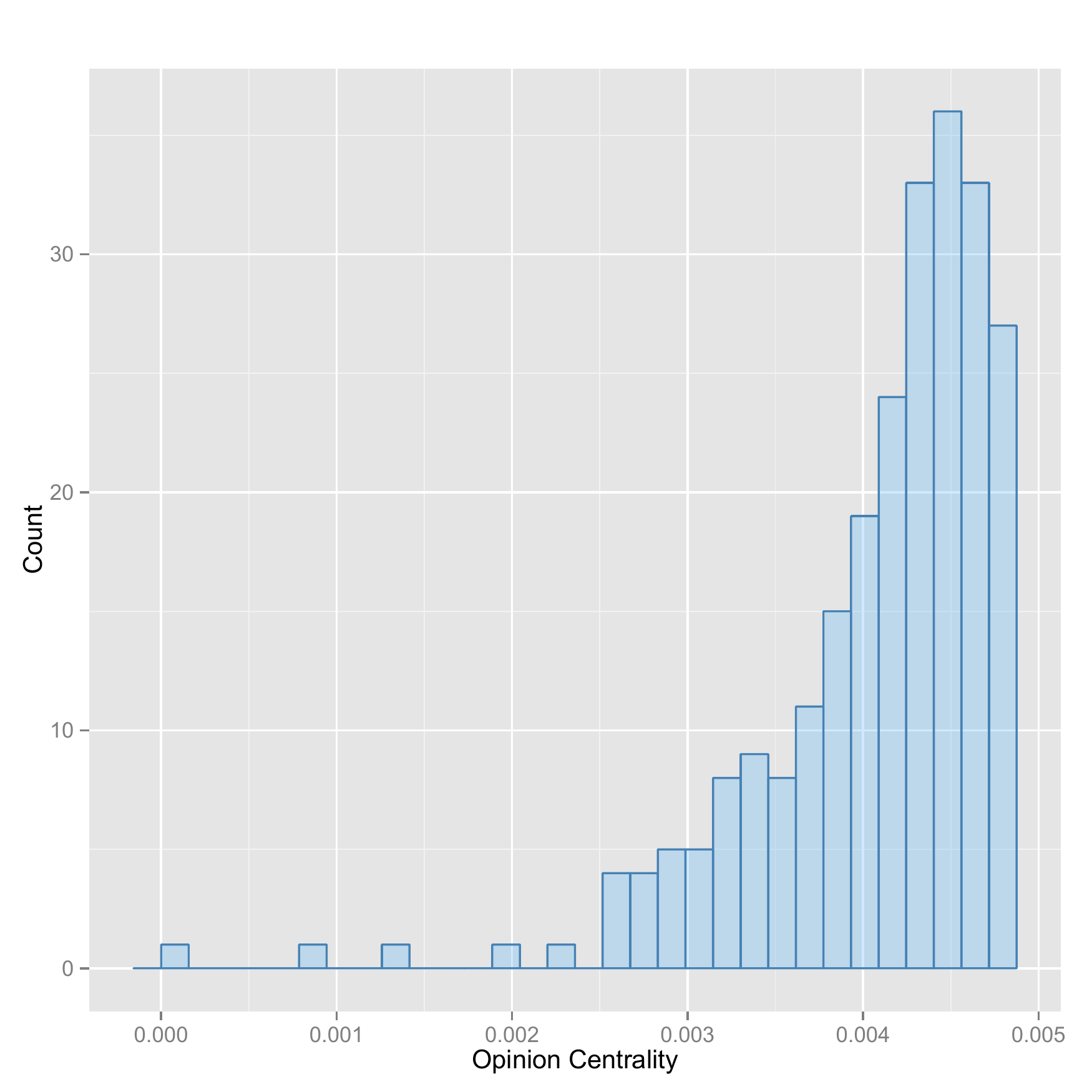}\\
	\includegraphics[width=0.4\textwidth]{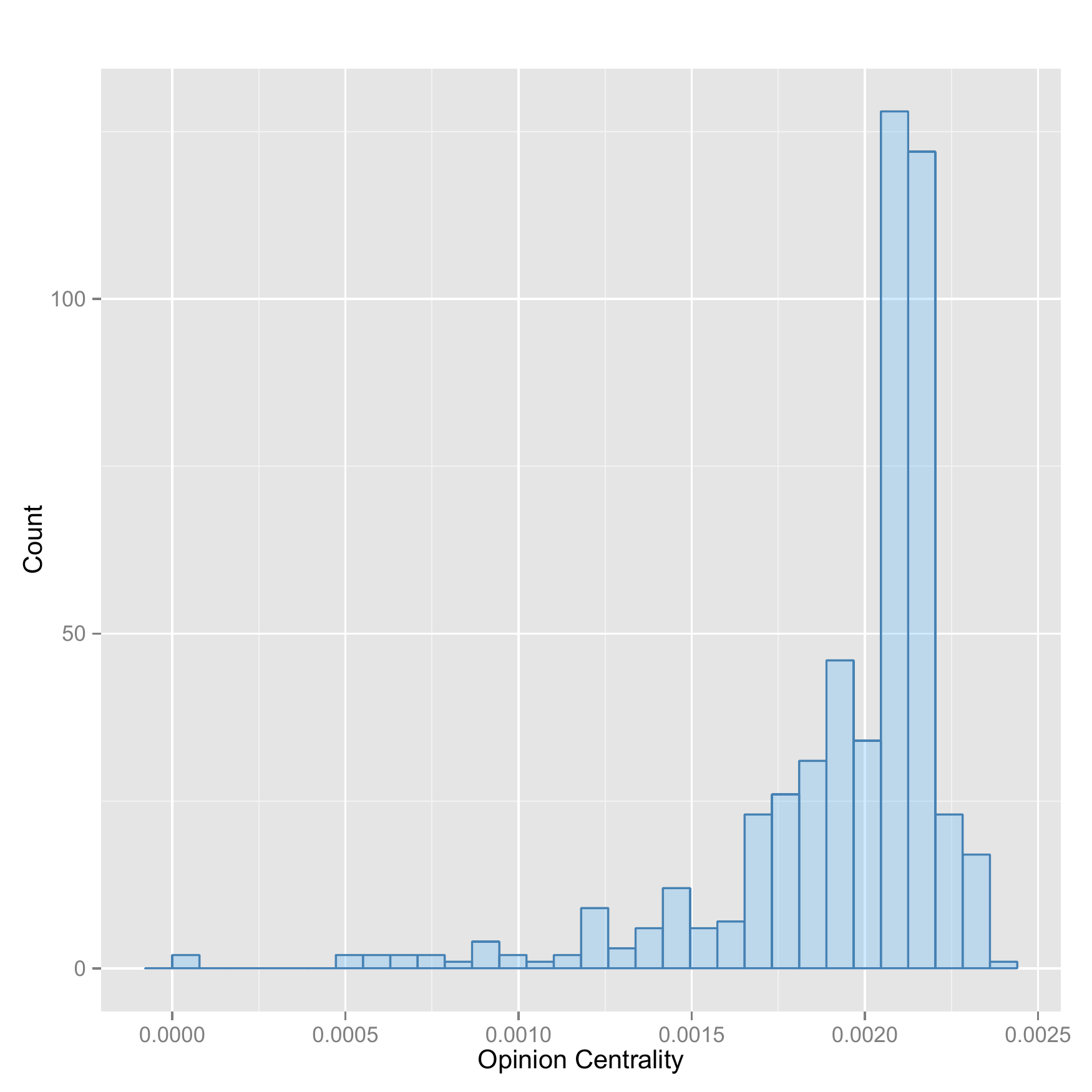}
	\includegraphics[width=0.4\textwidth]{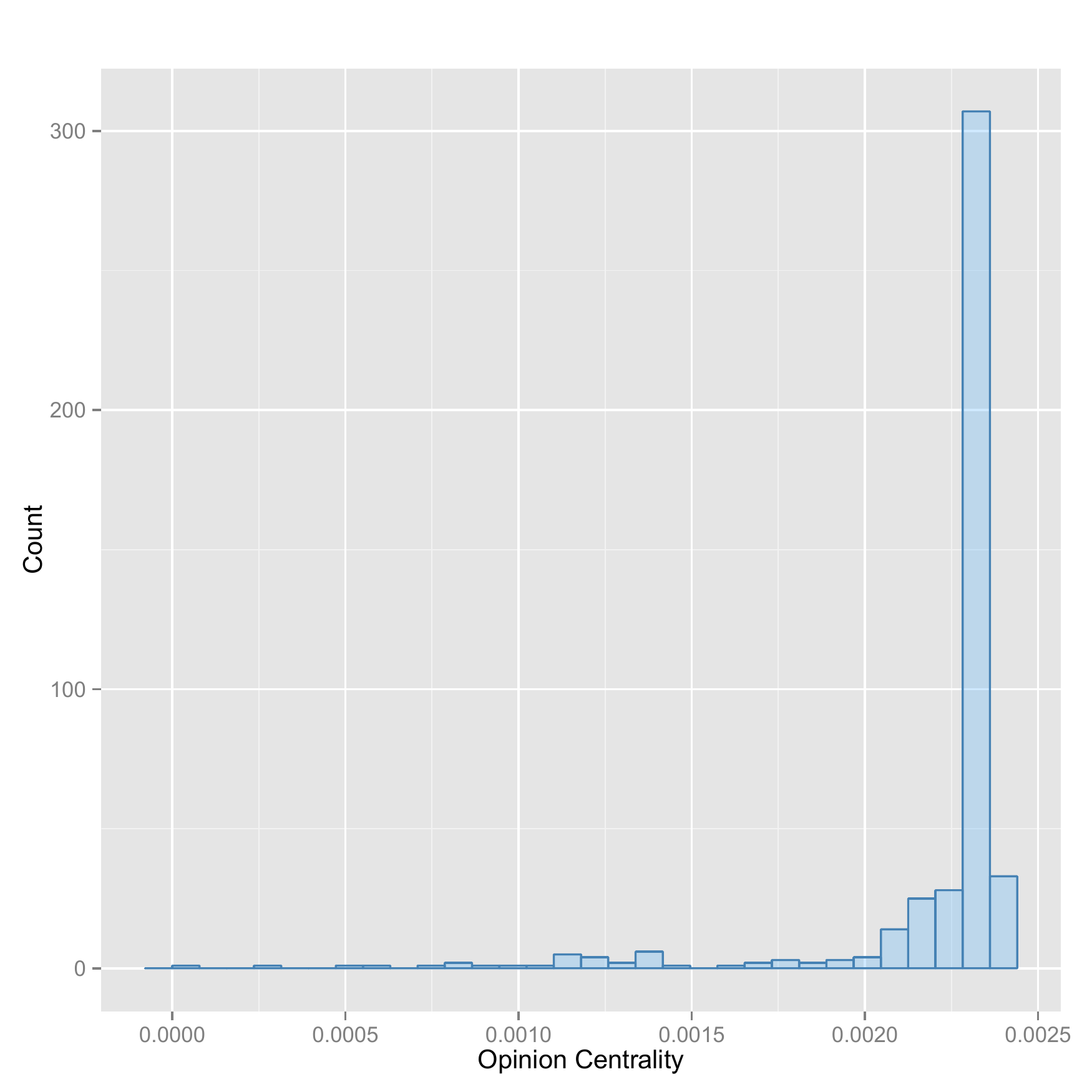}
	\caption{Distribution of the opinion measure for the Aarhus CS department (top-left), CKM Physicians (top-right), Pierre Auger collaborations (bottom left) and EU-Air transportation (bottom right) networks.}
	\label{fig:opiniondistrib}
\end{figure}

We now compare the opinion measure to the other multiplex centrality measures. For each network, we process Spearman's correlation to compare the ranks obtained for the opinion centrality to those of the other measures. The results are displayed in Table~\ref{tab:spearman}. Certain measures (Eigenvector, HITS, Katz, PageRank) could not be processed for the largest networks, due to their memory cost. This point, as well as other computational aspects, are discussed at the end of this section. Missing in- and out-degree values correspond to undirected networks. The correlations vary much depending on both the network and measure of comparison, ranging from $-0.97$ to $0.26$. Overall, we observe a mild to strong negative correlation for all considered measures. However, it is worth noticing that, for a given measure, the magnitude of the correlation varies depending on the network. For instance, it ranges from $-0.96$ to $0.26$ for Katz. This means the opinion measure does not simply systematically reverse the rankings of the considered measure. Moreover, the correlation also varies for a given network, e.g. it ranges from $-0.96$ to $0.02$ for CKM Phys. These observations confirm empirically that the opinion measure characterizes nodes based on different criteria than the other considered centraelity measures, as designed.

\hspace{-2cm}
\begin{table}[htb]
    \setlength{\tabcolsep}{3mm}
    \center
%	\begin{minipage}{\textwidth}
		\begin{tabular}{l@{\hspace{30pt}}r@{\hspace{4pt}}r@{\hspace{4pt}}r@{\hspace{16pt}}r@{\hspace{16pt}}r@{\hspace{4pt}}r@{\hspace{16pt}}r@{\hspace{16pt}}r}
			\hline
			\multirow{2}{*}{Network} & \multicolumn{3}{c}{Degree} & Eigen & \multicolumn{2}{c}{HITS} & \multirow{2}{*}{Katz} & Page \\
			               & Total & In & Out & Vector & Auth. & Hub & & Rank \\
			\hline
			Aarhus & $-0.81$ & $-$ & $-$ & $-0.94$ & $-0.74$ & $-0.74$ & $-0.74$ & $-0.74$\\
			Arabidopsis & $-0.76$ & $-0.45$ & $-0.65$ & $-$ & $-$ & $-$ & $-$ & $-$\\
			Celegans & $-0.38$ & $-0.72$ & $0.13$ & $-0.96$ & $-$ & $-$ & $-$ & $-$\\
			CKM Phys. & $-0.80$ & $-0.96$ & $-0.17$ & $-0.86$ & $-$ & $0.02$ & $-0.49$ & $-0.15$\\
			Drosophila & $-0.73$ & $-0.89$ & $-0.36$ & $-$ & $-$ & $-$ & $-$ & $-$\\
			EU-Air & $-0.95$ & $-$ & $-$ & $-0.96$ & $-$ & $-$ & $-$ & $-$\\
			FAO Trade & $-0.41$ & $-1.00$ & $0.02$ & $-$ & $-$ & $-$ & $-$ & $-$\\
			Hepatitus & $-0.18$ & $0.01$ & $-0.01$ & $-0.57$ & $-$ & $-0.17$ & $-0.49$ & $-$\\
			Human-HIV1 & $-0.48$ & $-0.51$ & $-0.12$ & $-0.72$ & $-0.30$ & $-0.11$ & $0.26$ & $-0.12$\\
			Kapferer1 & $-0.90$ & $-0.91$ & $-0.86$ & $-0.98$ & $-0.78$ & $-0.68$ & $-0.77$ & $-0.66$\\
			Kapferer2 & $-0.90$ & $-0.91$ & $-0.87$ & $-0.95$ & $-0.83$ & $-0.76$ & $-0.82$ & $-0.75$\\
			Knoke & $-0.68$ & $-0.72$ & $-0.34$ & $-0.76$ & $-0.77$ & $-0.16$ & $-0.89$ & $-0.10$\\
			Lazega & $-0.83$ & $-0.93$ & $-0.50$ & $-0.94$ & $-0.90$ & $-0.48$ & $-0.95$ & $-0.48$\\
			London & $-0.72$ & $-$ & $-$ & $-0.77$ & $-0.03$ & $-0.03$ & $-0.03$ & $-0.06$\\
			Padgett & $-0.93$ & $-$ & $-$ & $-0.94$ & $-0.89$ & $-0.89$ & $-0.89$ & $-0.89$\\
			Pierre Auger & $-0.46$ & $-$ & $-$ & $-0.73$ & $-$ & $-0.40$ & $-0.40$ & $-0.47$\\
			Rattus & $-0.50$ & $-0.73$ & $-0.08$ & $-0.88$ & $-$ & $-$ & $-$ & $-$\\
			Roethlisberger & $-0.60$ & $-0.66$ & $-0.59$ & $-0.89$ & $-0.79$ & $-0.73$ & $-0.82$ & $-0.77$\\
			Sampson & $-0.71$ & $-0.93$ & $0.14$ & $-0.82$ & $-0.82$ & $0.00$ & $-0.96$ & $-0.06$\\
			Thurmann & $-0.73$ & $-0.97$ & $-0.70$ & $-0.96$ & $-0.82$ & $-0.71$ & $-0.78$ & $-$\\
			Wolfe & $-0.50$ & $-0.59$ & $-0.45$ & $-0.44$ & $-0.38$ & $-0.36$ & $-0.38$ & $-0.36$\\
			\hline
		\end{tabular}
%		\vspace{-2\baselineskip}
%	\end{minipage}
    \caption{Spearman's correlation between the opinion measure and the other considered multiplex centrality measures.}
	\label{tab:spearman}
\end{table}

To get a better insight of the opinion measure, We consider the nodes individually. The left plot in Figure~\ref{fig:rankdifferences} represents the difference in ranking between the opinion centrality and the Authority measure, on the CKM Phys. network. The right plot is built similarly, but focuses on the out-degree in the Lazega network. Both plots are very typical of what we observe for other data and centrality measures. For a given centrality measure (here: the Authority and the out-degree), the nodes are ordered on the $x$ axis by increasing centrality values, whereas the $y$ axis represent how the node ranking changes from the considered measure to the opinion measure. On the extremes, the opinion measure tends to demote the nodes the most central according to the other measures, whereas it promotes the least central ones. Those all become moderately central, according to the opinion measure. On the contrary, certain nodes previously with intermediate ranking are placed among the most or least central nodes by the opinion measure. By comparison, the same plot built to compare two alternative multiplex measure typically leads to a flatter figure, with smaller ranking differences, especially regarding the most and least central nodes.  

 \begin{figure}[htb]
 	\center
 	\includegraphics[scale=0.49]{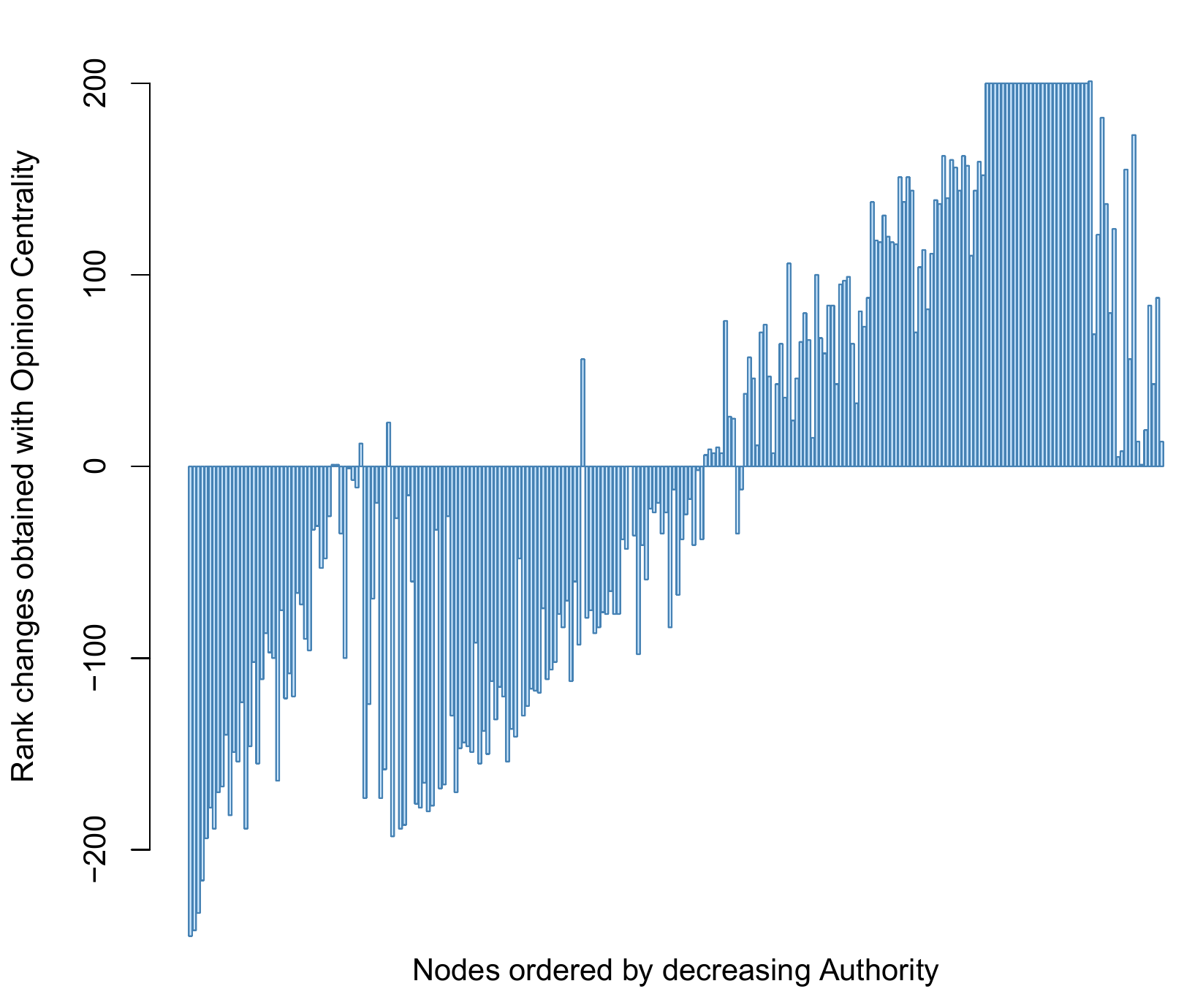}
 	\includegraphics[scale=0.49]{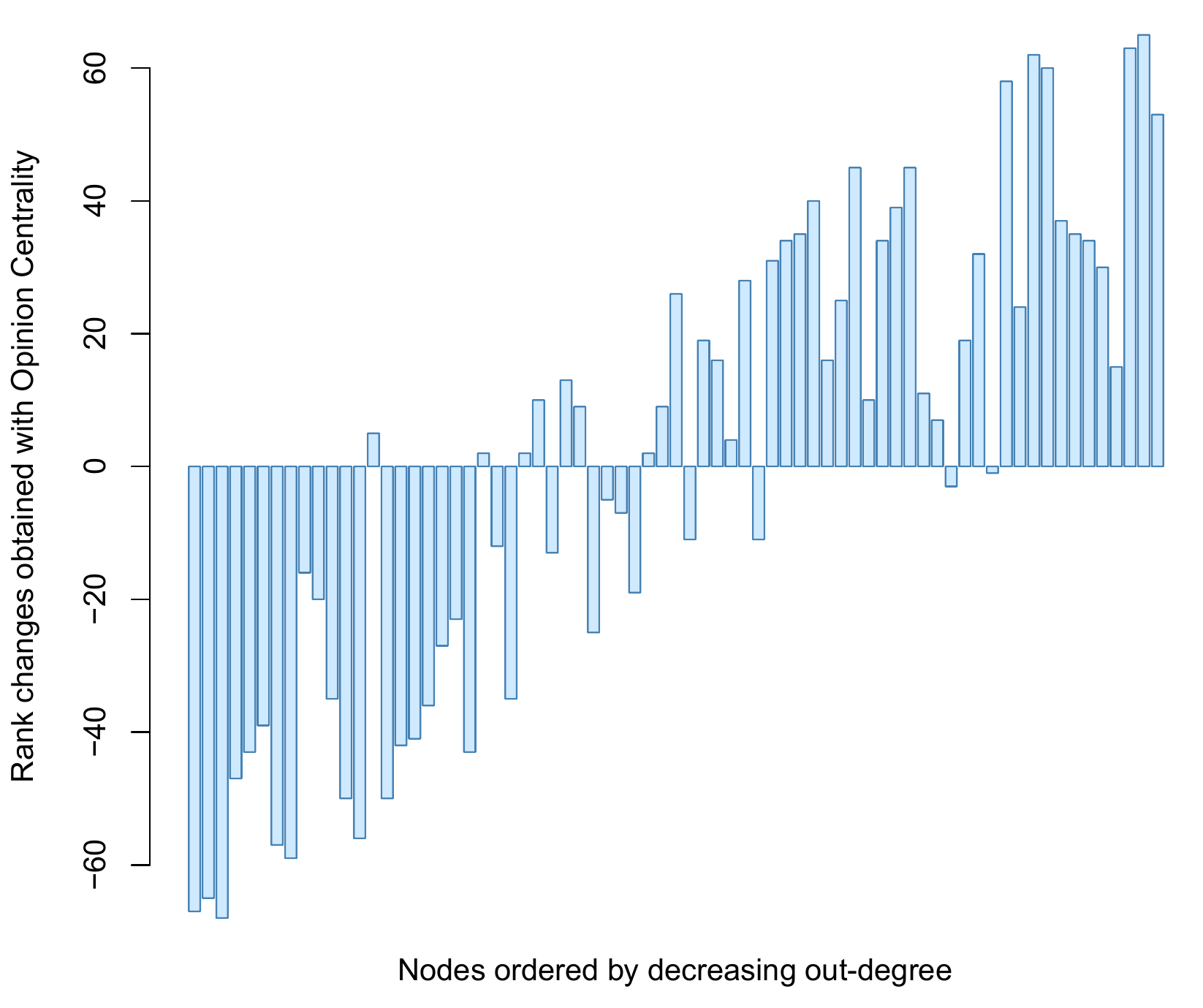}
 	\caption{Rank difference for each individual node, in the CKM Phys. (left) and Lazega (right) networks. The left plot compares the opinion centrality to the Authority measure, whereas it is the out-degree in the right one. Each bar corresponds to a node, and its height matches the rank difference.}
 	\label{fig:rankdifferences}
 \end{figure}

Figure \ref{fig:overallrankdifferences} gives a more global outlook of the opinion centrality behavior relatively to the other measures. Each plot is built on the same principle than the ones in Figure \ref{fig:rankdifferences}, and consequently focuses on a specific alternative measure. However, this time all the networks appear at once, in each plot, and are restricted to their $5$ most central nodes (according to the considered alternative measure). Moreover, to get comparable $y$ values, the rank difference is normalized: we divide it by $n-1$ ($n$ being the number of nodes in each layer) in order to get a value between $-1$ and $+1$. These plots confirm certain of the observations we previously made. In particular, the most central nodes tend to undergo some dramatic ranking changes, but this depends on both the considered measure and network.

\begin{figure}[htb]
	\center
	\includegraphics[scale=0.38]{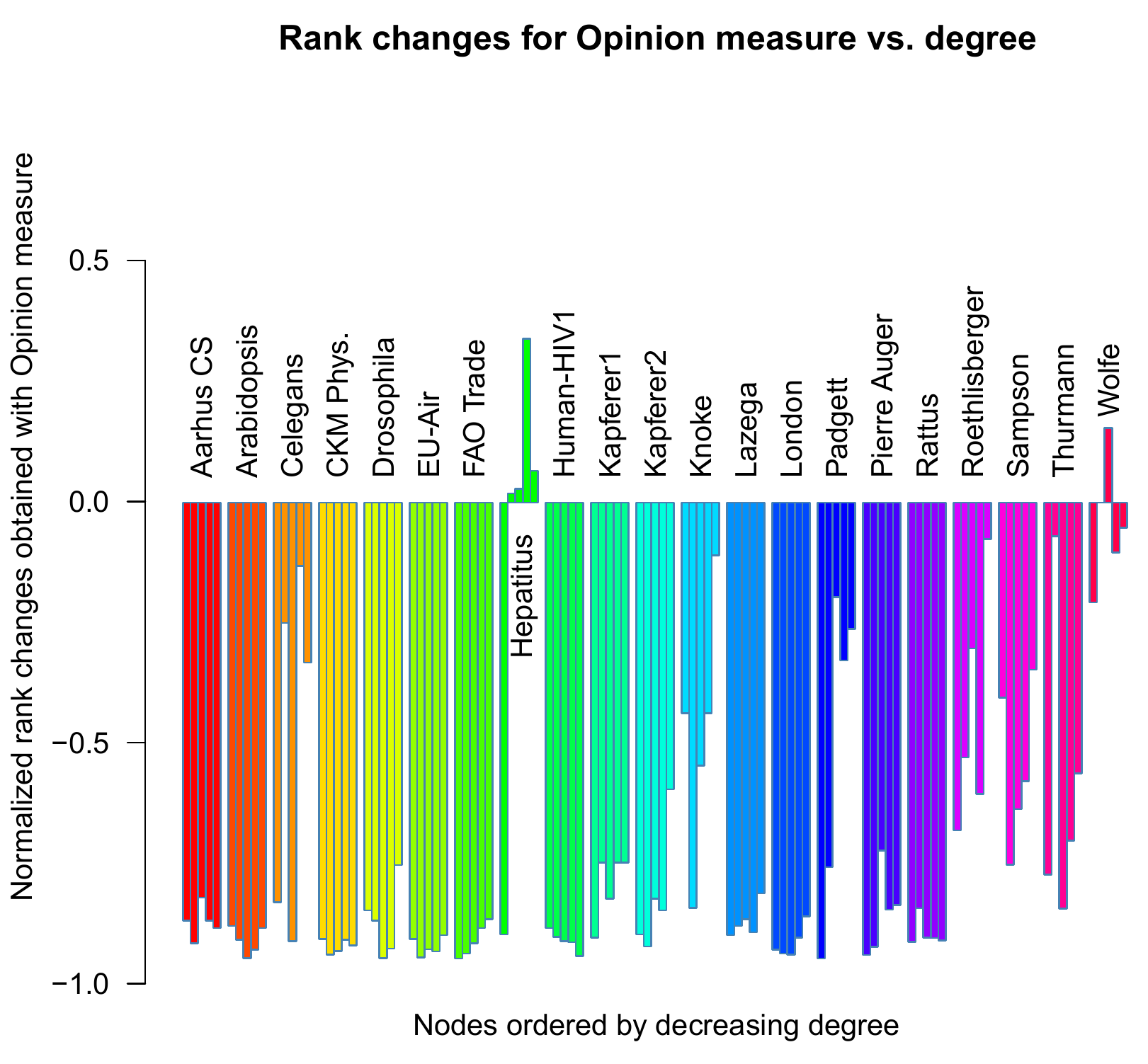}
	\includegraphics[scale=0.38]{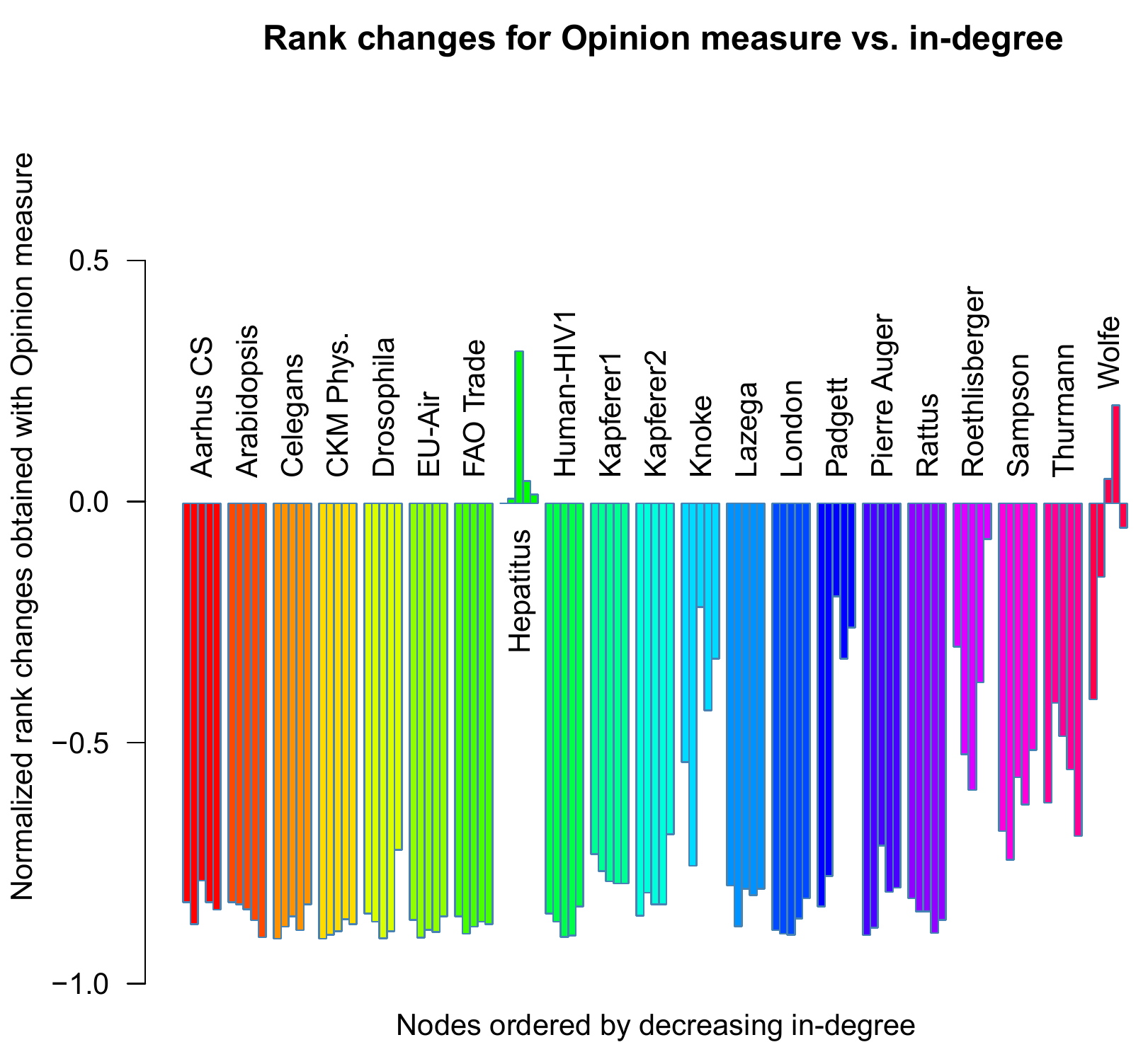}
    \\ \vspace{3mm}
	\includegraphics[scale=0.38]{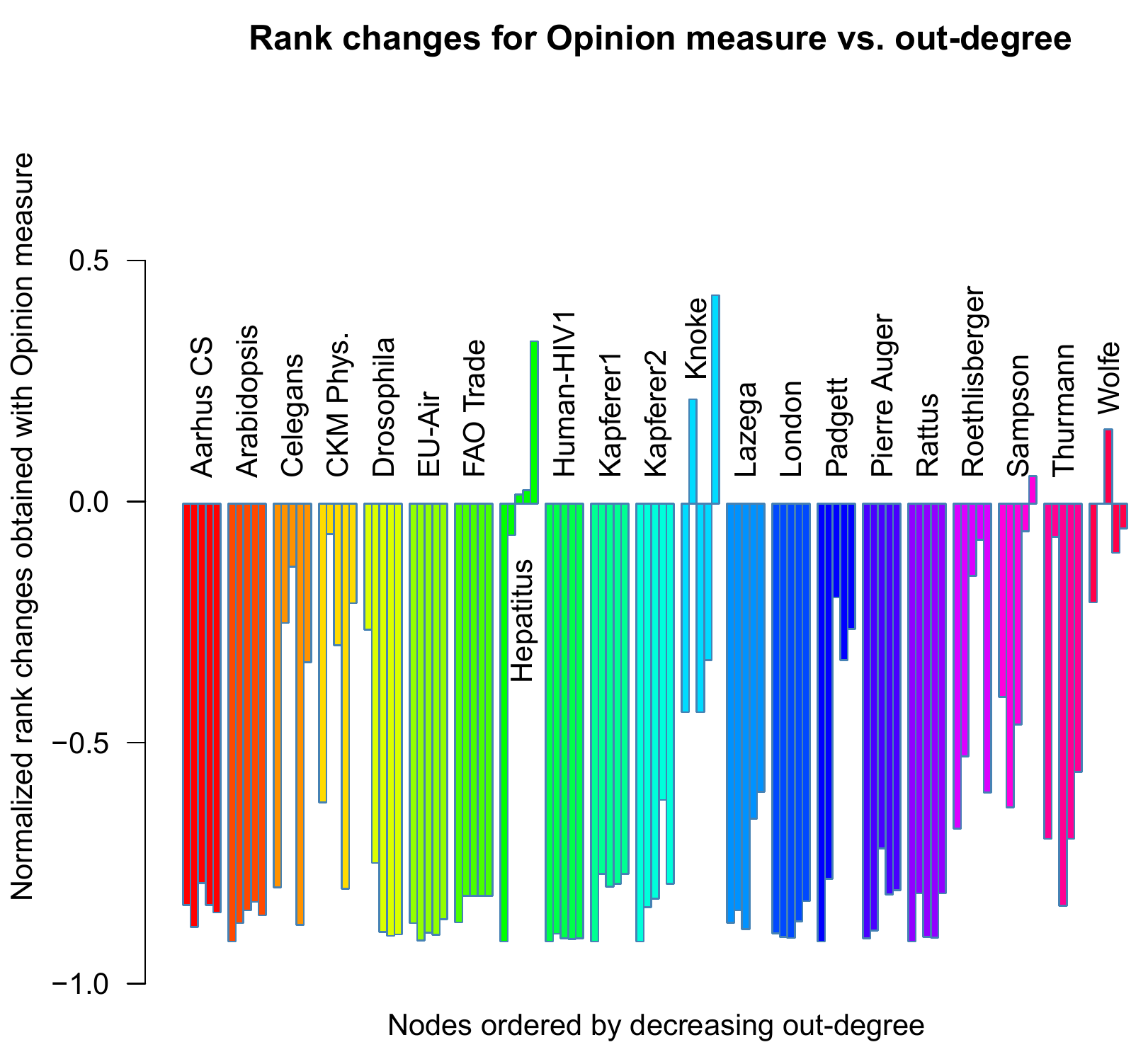}
	\includegraphics[scale=0.38]{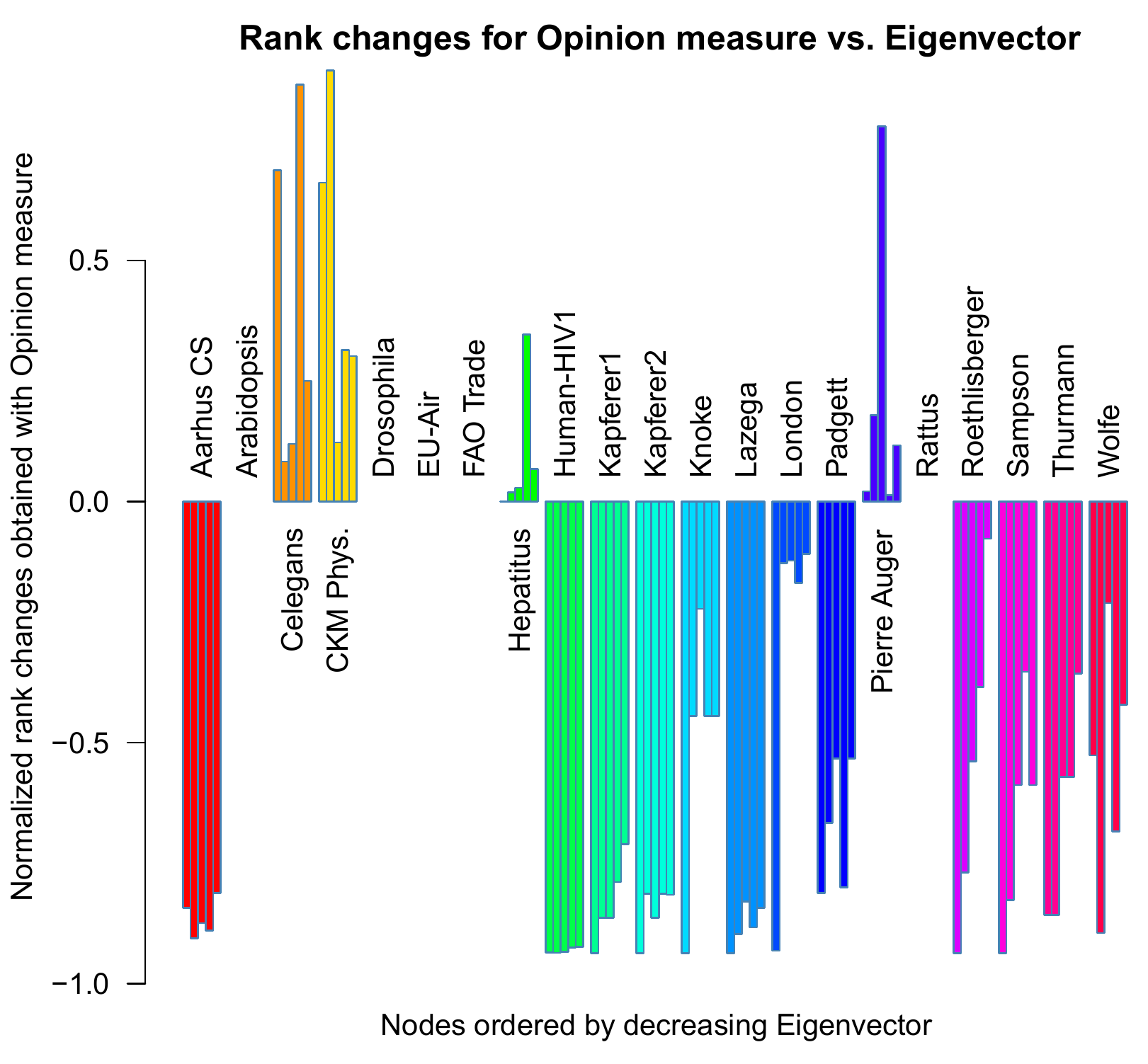}
    \\ \vspace{3mm}
	\includegraphics[scale=0.38]{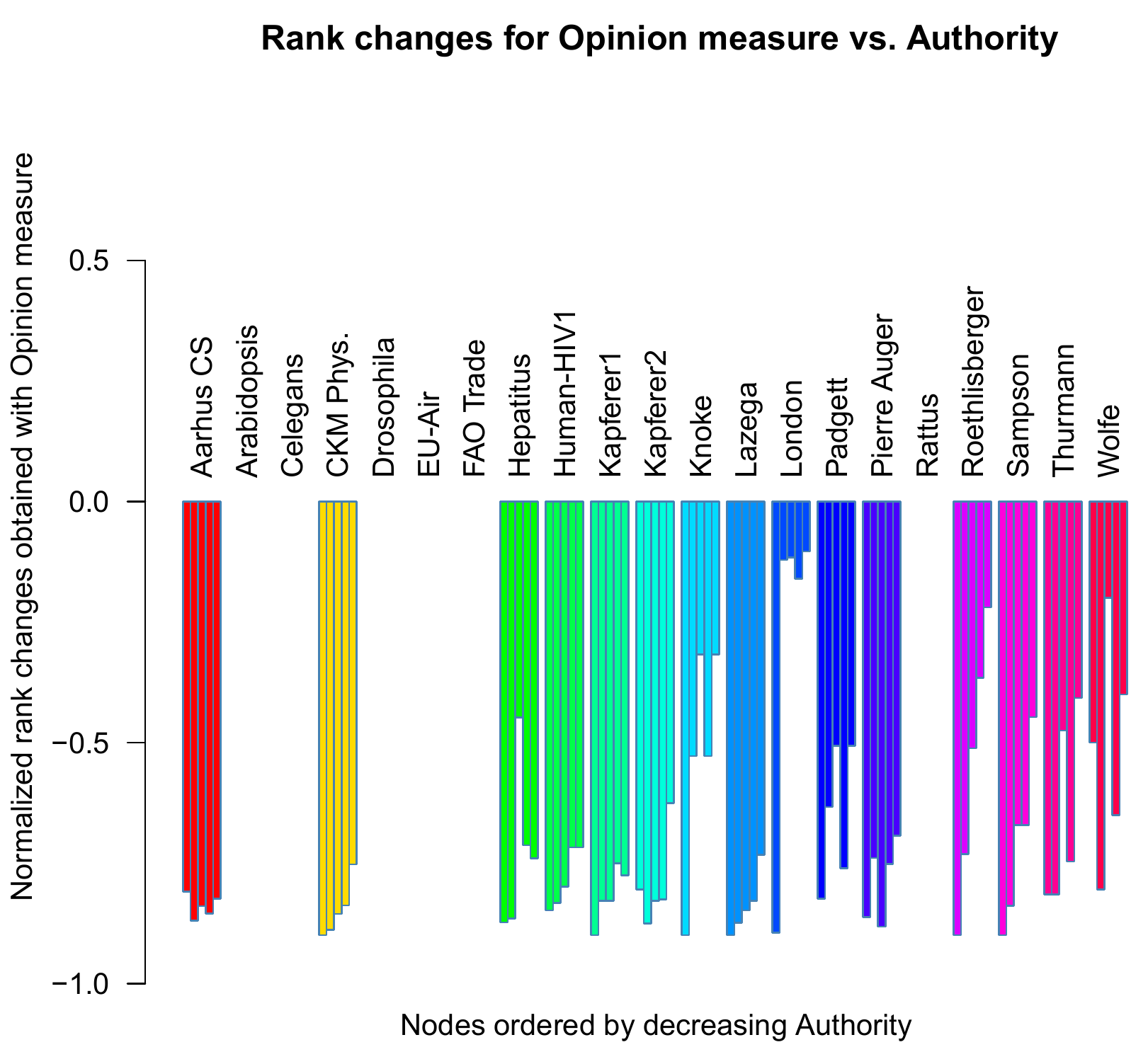}
	\includegraphics[scale=0.38]{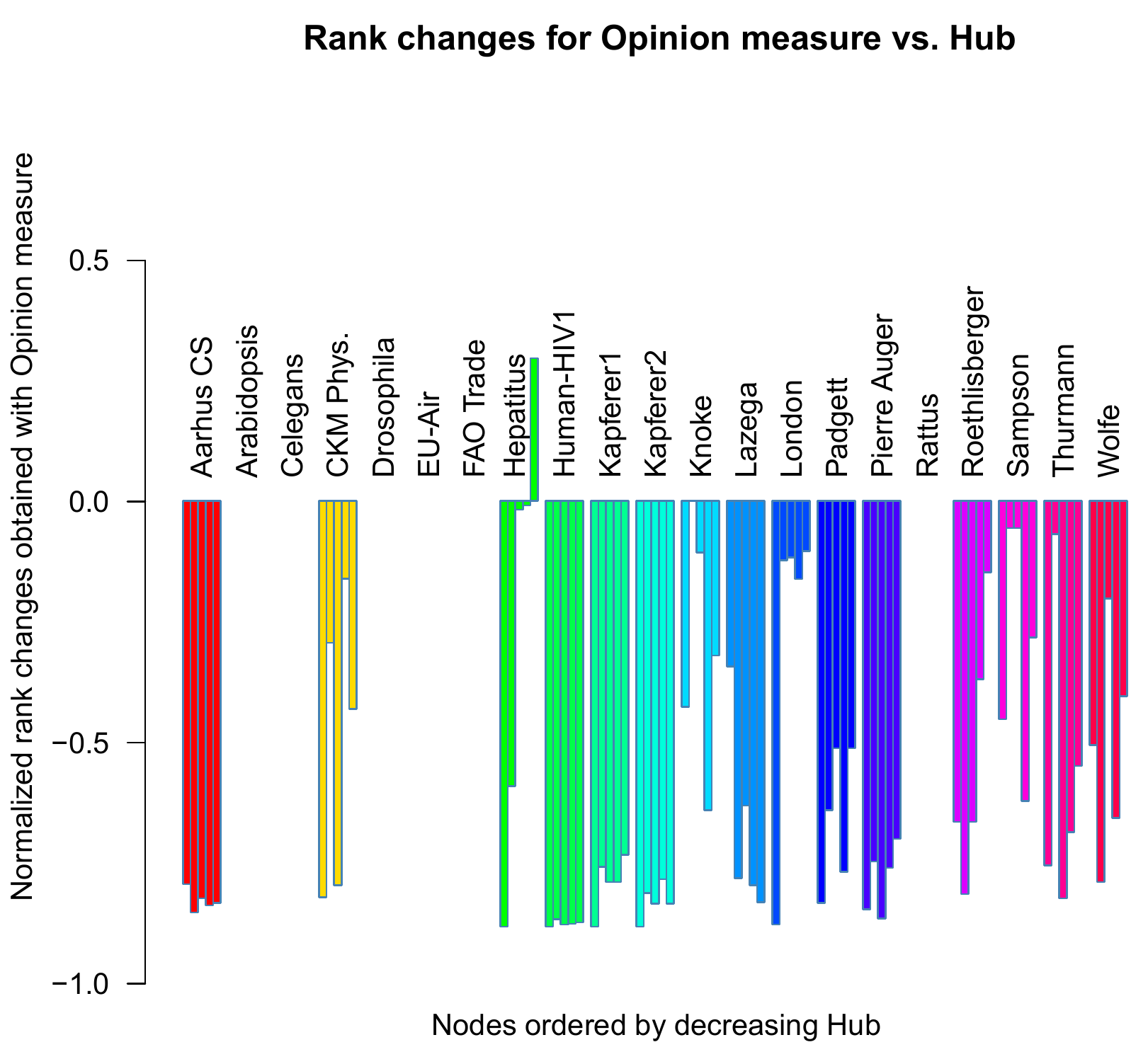}
    \\ \vspace{3mm}
	\includegraphics[scale=0.38]{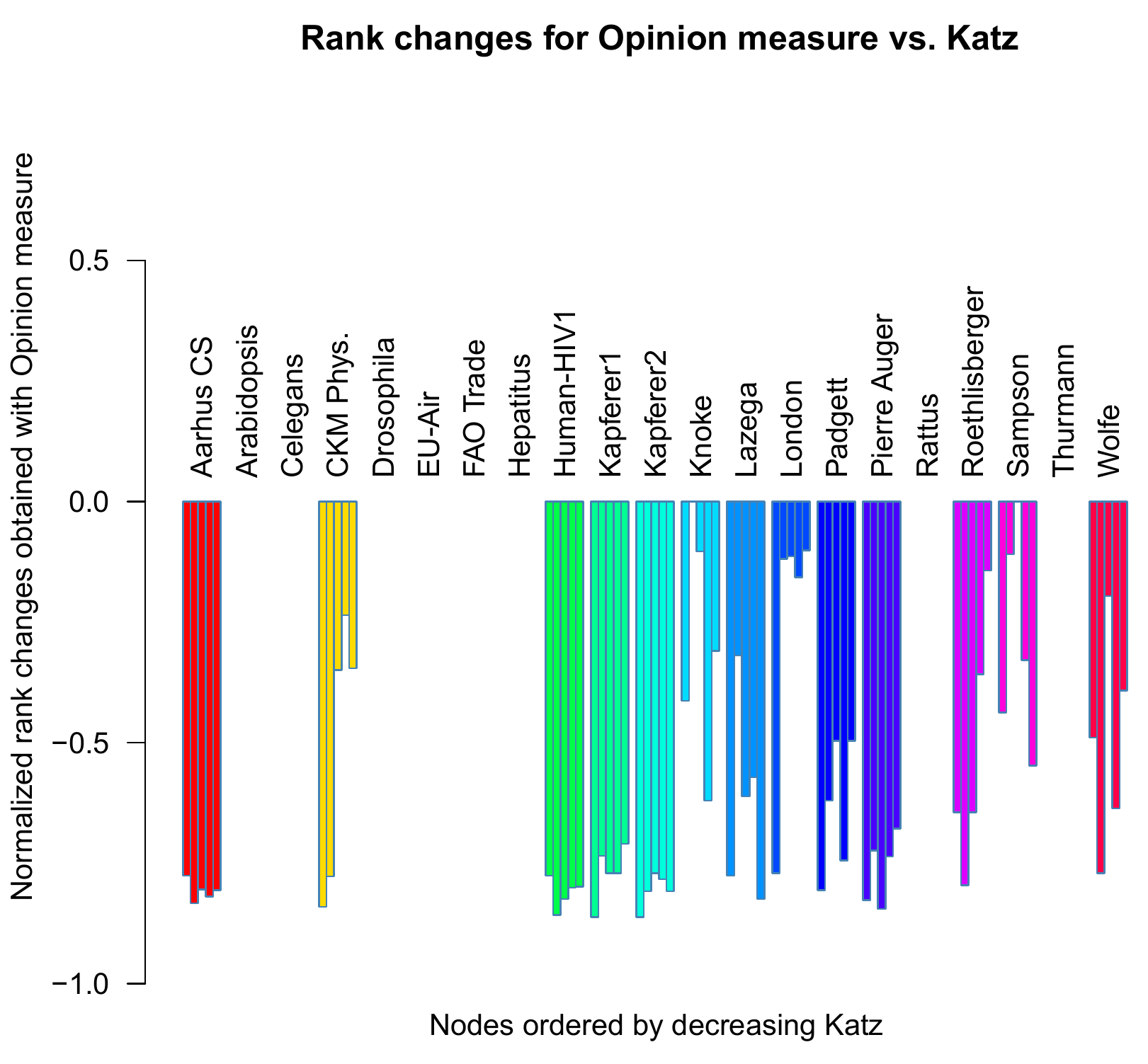}
	\includegraphics[scale=0.38]{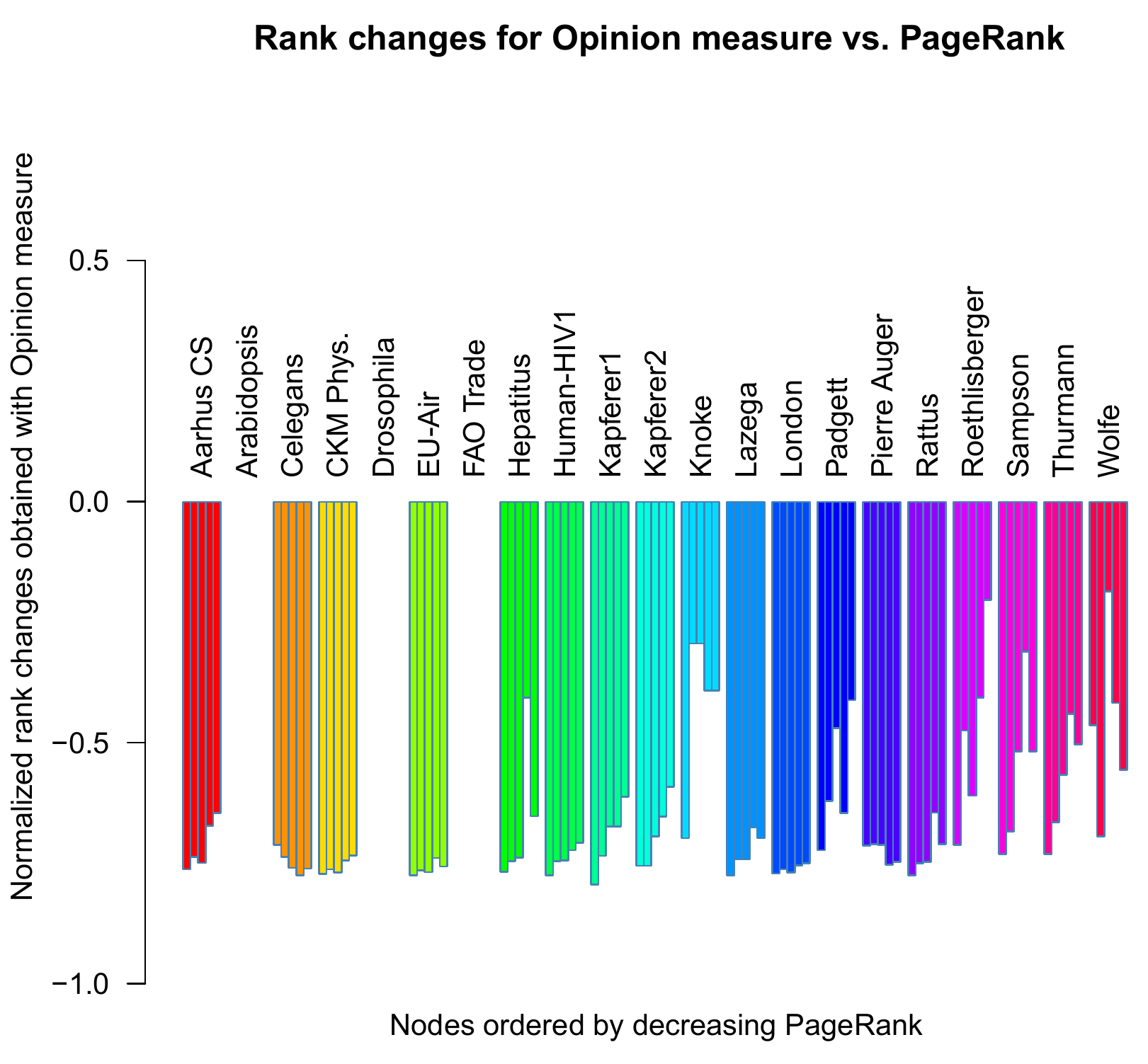}
	\caption{Rank difference for each network. Each plot compares the opinion centrality with one of the alternative multiplex measures considered in this article. The principle is the same than in Figure \ref{fig:rankdifferences}, except only the $5$ most central nodes (according to the alternative measure) are represented for each network, and the $y$ is normalized according to the network size for readability purposes.}
	\label{fig:overallrankdifferences}
\end{figure}

The ranking differences observed between the opinion centrality and the other multiplex measures are due to the optimization problem it is based upon. Indeed, in this problem, it can be necessary to externally stimulate certain nodes which do not have a particularly high degree, or have no neighbors with a particularly high degree. For instance, a leaf node whose unique link is directed towards the rest of the network will not be reachable from another node. So, if one wants to influence the opinion of the whole network, it is worth acting directly on this node. This kind of node is typically considered by measures such as the Degree or Eigenvector centralities as not central. This highlights the fact the semantics of the opinion measure is clearly unlike that of the other measures considered here.

\begin{figure}[!htb]
	\center
	\includegraphics[width=0.4\textwidth]{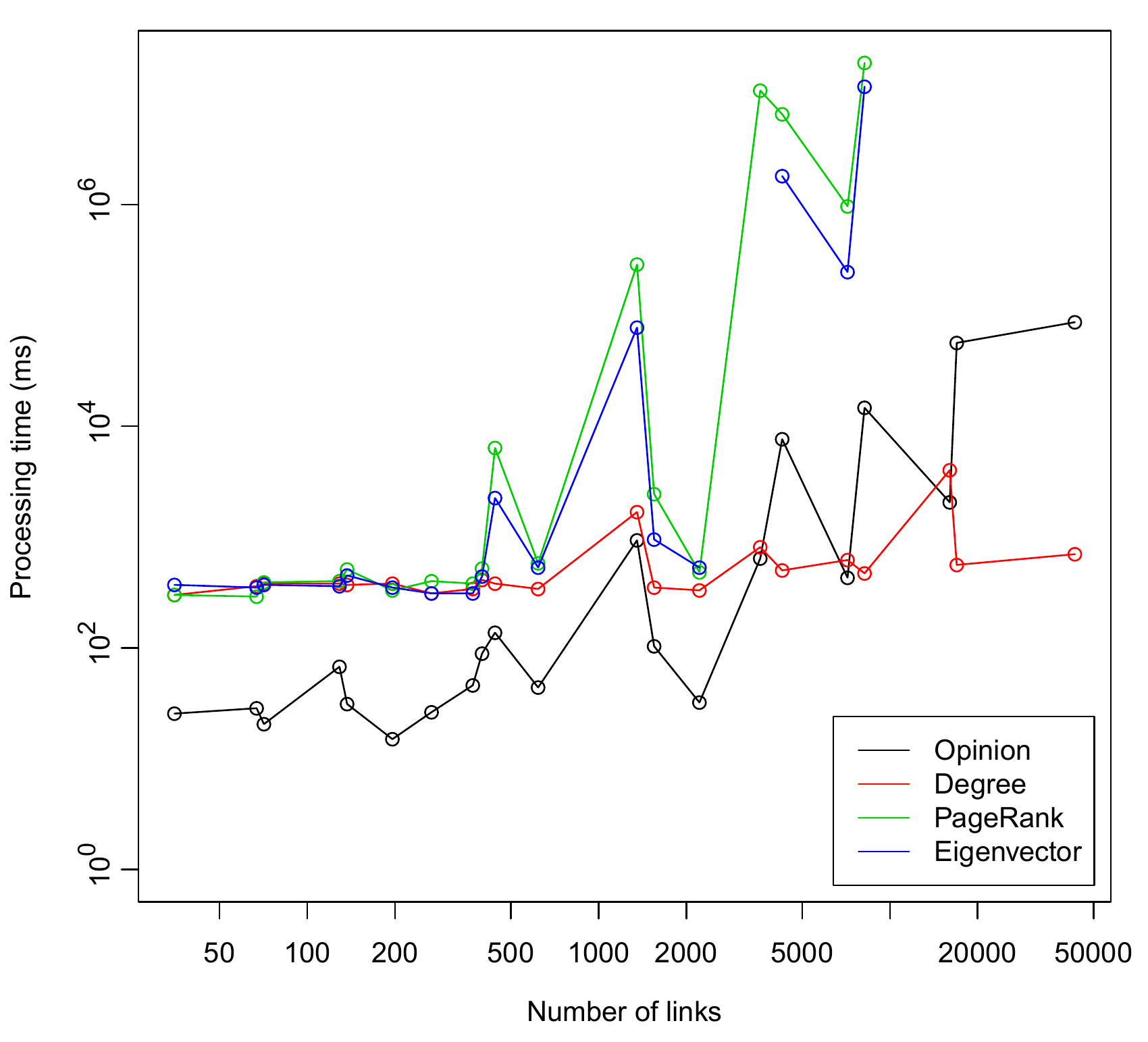}
	\includegraphics[width=0.4\textwidth]{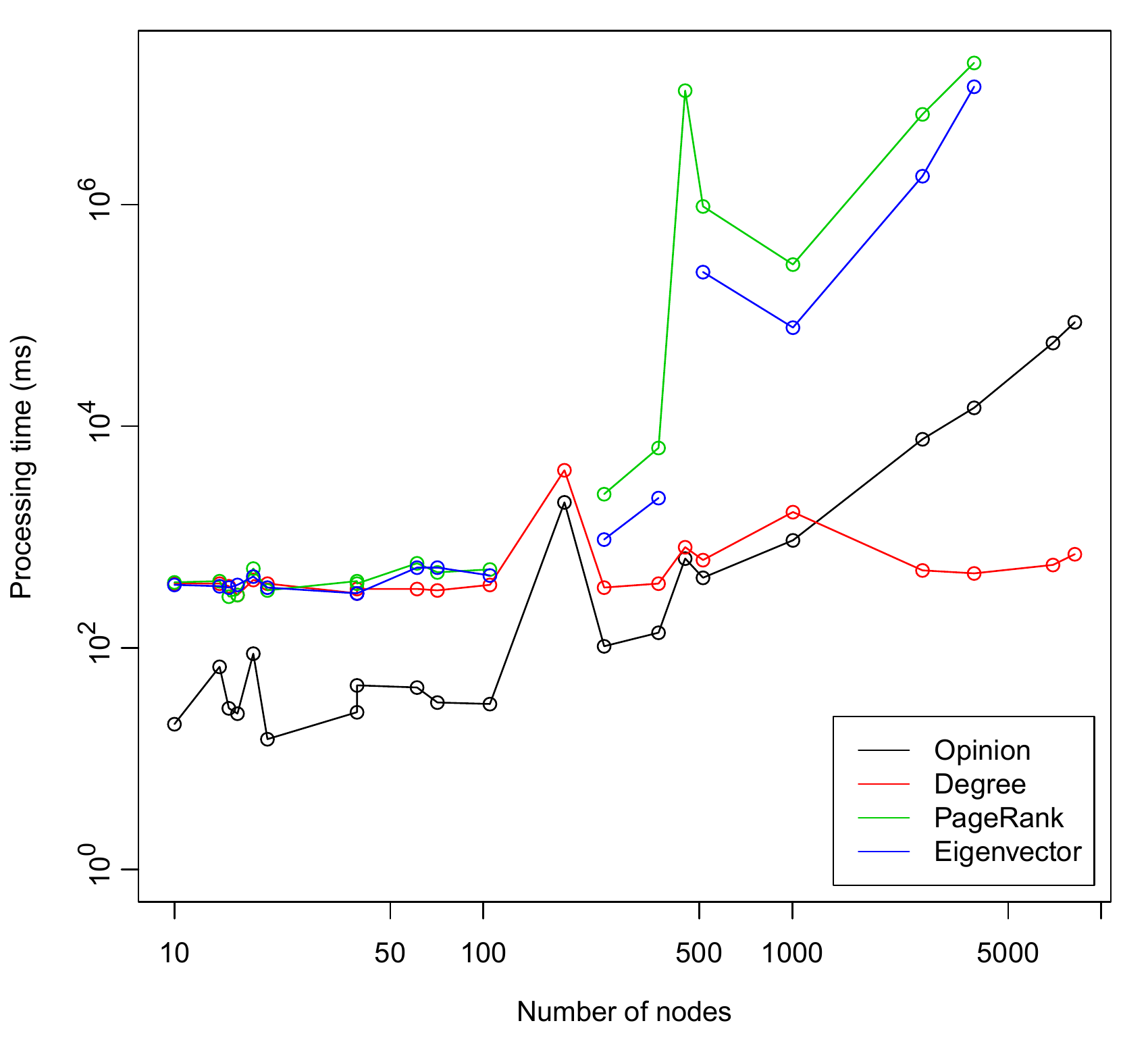}
	\caption{Processing times of the main considered measures, as functions of the numbers of links (left) and nodes (right).}
	\label{fig:processTimes}
\end{figure}

Finally, we compare the computational costs of the measures. During the processing of the opinion centrality, the most expensive operation is the inversion of an $n \times n$ matrix ($n$ being the number of nodes by layer). Therefore the time complexity associated to the opinion centrality is in $O(n^3)$. We do not have access to the algorithmic complexity for the other multiplex measures considered in this article, so we compare the measures empirically. Figure~\ref{fig:processTimes} displays the processing times obtained when computing our opinion centrality, as well as the multiplex versions of the degree, PageRank and Eigenvector measures. We used a plain desktop PC (i5 3.00GHz quadcore processor with 16GB RAM). Note that both axes use a logarithmic scale. The left plot contains the processing times as a function of the number of links in the network, whereas the right one focuses on the number of nodes. We did not include all measures for readability matters, and because in- and out-degree behave like degree, whereas Hub, Authority and Katz performances are located in between Eigenvector and PageRank. 

The processing time for the degree is quite stable, as expected from this purely local measure. For the Eigenvector and PageRank measures, it increases exponentially with both the number of nodes and links in the network. The opinion measure also undergoes a very fast increase, but clearly slower than Eigenvector and PageRank. For the largest network, it is a matter of minutes. In terms of memory usage, the opinion centrality is also less expensive, as illustrated by the fact we could not process the Eigenvector, Authority, Hub, Katz and PageRank measures for the largest networks considered in this study, due to memory limits. Finally, we did not detect any effect of the opinion centrality parameters on its processing time.

\section{Conclusion}
In this article, we presented the opinion centrality, a measure designed to characterize the relative position of nodes in a multiplex network. Our work relies on a stochastic model representing opinion diffusion dynamics in a social group of persons communicating through several independent media. The opinion centrality is derived from the solution of an optimization problem defined on this model, and consisting in maximizing the overall opinion of the social group through direct individual influence. We show on a toy example and a collection of real-world networks that the node rankings obtained with the opinion centrality clearly differ from other multiplex measures. In particular, high degree nodes are not necessarily considered as central, and low degree nodes can be seen as central if they allow a better control of the opinion diffusion.

Our work can be extended in various ways. First, we limited our experimental assessment to uniform internal influence coefficients (parameter $\alpha$ in our model), because the available real-world data do not provide this type of information, and therefore do not allow any validation regarding the use of non-uniform values. However, it would be interesting to explore this trail, possibly using artificially generated networks. On the same note, we also used the simplest available utility function (plain sum), but we want to explore how the measure behaves when using more advanced utilities such as the ones we mentioned earlier in the article (weighted sum, minimum, product). Second, to ease human interpretation and to allow comparisons with other multiplex measures, we focused our tests on small and medium-sized real-world networks only. The next step will consist in studying how the opinion measure scales when applied to much larger networks, such as those provided with MuxViz. 

Third, regarding the model itself, we can think of two promising extensions. The first one is the temporal network extension. There are several very different ways of modeling opinion dynamics in a temporal network. For instance, it is necessary to decide if the network structure and the opinion dynamics evolve simultaneously. Concerning the opinion centrality, its calculation could not rely on static control, like in this paper, and one should rather use the theory of optimal control. The other extension is related to the amount of information available when processing the opinion centrality. It would be interesting to consider the case where we do not know the structure of the network, and only get some feedback from the users. This problem could be solved by using an approximate gradient algorithm. In the worst case scenario, there would be no feed back at all, and we could use matrix completion techniques coupled with gradient optimization in order to recover the opinion centrality.

%%%%%%%%%%%%%%%%%%%%%%%%%%%%%%%%%%%%%%%%%%%%%%%%%%%%%%%%%%%%%%%%%%%%%%%%%
%\bibliographystyle{acm}
%\bibliography{biblio}
\printbibliography

%%%%%%%%%%%%%%%%%%%%%%%%%%%%%%%%%%%%%%%%%%%%%%%%%%%%%%%%%%%%%%%%%%%%%%%%%
\vspace{1cm}

%%%%%%%%%%%%%%%%%%%%%%%%%%%%%%

\end{document}